\newif\ifabstract
\abstracttrue
 \abstractfalse 
\newif\iffull
\ifabstract \fullfalse \else \fulltrue \fi

\documentclass[11pt]{article}

\usepackage{tikz}
\usetikzlibrary{tikzmark}
\newcommand{\tikzarc}[1]{%
\tikzmarknode{a}{#1}
\begin{tikzpicture}[overlay,remember picture]
\draw ([yshift=1pt]a.north west) to[bend left=20] ([yshift=1pt]a.north east);
\end{tikzpicture}%
}

\usepackage{amsfonts}
\usepackage{amssymb}
\usepackage{amstext}
\usepackage{amsmath}
\usepackage{xspace}
\usepackage{theorem}
\usepackage{graphicx}
\usepackage{url}
\usepackage{graphics}
\usepackage{colordvi}
\usepackage{colordvi}
\usepackage{subfigure}
\usepackage[noend]{algpseudocode}

\usepackage{graphicx,amssymb,amsmath}
\usepackage[ruled,vlined,linesnumbered]{algorithm2e}
\usepackage{xcolor}

\textheight 9.3in \advance \topmargin by -1.0in \textwidth 6.7in
\advance \oddsidemargin by -0.8in
\newcommand{\myparskip}{3pt}
\parskip \myparskip

        {\hspace*{\fill}$\Box$\par\vspace{4mm}}

\newcommand{\be}{\begin{enumerate}}
\newcommand{\ee}{\end{enumerate}}
\newcommand{\bd}{\begin{description}}
\newcommand{\ed}{\end{description}}
\newcommand{\bi}{\begin{itemize}}
\newcommand{\ei}{\end{itemize}}

\newtheorem{theorem}{Theorem}[section]
\newtheorem{lemma}[theorem]{Lemma}
\newtheorem{observation}[theorem]{Observation}

\newtheorem{definition}{Definition}[section]
\newtheorem{remark}{Remark}[section]
\newenvironment{proof}{\par \smallskip{\bf Proof:}}{\hfill\stopproof}
\def\stopproof{\square}
\def\square{\vbox{\hrule height.2pt\hbox{\vrule width.2pt height5pt \kern5pt
\vrule width.2pt} \hrule height.2pt}}


\usepackage{graphicx,amssymb,amsmath}
\usepackage[ruled,vlined]{algorithm2e}
\usepackage{xcolor}



\renewcommand{\phi}{\varphi}



\setlength{\parskip}{2mm} \setlength{\parindent}{0mm}

\mathchardef\hyphen="2D

\newtheorem{problem}{Problem}








\begin{document}

\title{Dispersing Facilities on Planar Segment and Circle Amidst Repulsion}
\author{Vishwanath R. Singireddy
\and   Manjanna Basappa\thanks{Corresponding author
}}

\date{%
Department of Computer Science \& Information Systems, \\BITS Pilani, Hyderabad Campus, Telangana 500078, India\\{\tt \{p20190420,manjanna\}@hyderabad.bits-pilani.ac.in}\\%
    \today
}


\begin{titlepage}
\maketitle

\thispagestyle{empty}

\begin{abstract}
In this paper we consider the problem of locating $k$ obnoxious facilities (congruent disks of maximum radius) amidst $n$ demand points (existing repulsive facility sites) ordered from left to right in the plane so that none of the existing facility sites are affected (no demand point falls in the interior of the disks). We study this problem in two restricted settings: (i) the obnoxious facilities are constrained to be centered on along a predetermined horizontal line segment $\overline{pq}$, and (ii) the obnoxious facilities are constrained to lie on the boundary arc of a predetermined disk $\cal C$. An $(1-\epsilon)$-approximation algorithm was given recently to solve the constrained problem in (i) in time $O((n+k)\log{\frac{||pq||}{2(k-1)\epsilon}})$, where $\epsilon>0$ \cite{Sing2021}. Here, for the problem in (i), we first propose an exact polynomial-time algorithm based on a binary search on all candidate radii computed explicitly. This algorithm runs in $O((nk)^2\log{(nk)}+(n+k)\log{(nk)})$ time. We then show that using the parametric search technique of Megiddo \cite{MG1983}; we can solve the problem exactly in $O((n+k)^2)$ time, which is faster than the latter. Continuing further, using the improved parametric technique we give an $O(n\log^2 n)$-time algorithm for $k=2$. We finally show that the above $(1-\epsilon)$-approximation algorithm of \cite{Sing2021} can be easily adapted to solve the circular constrained problem of (ii) with an extra multiplicative factor of $n$ in the running time.

\end{abstract}

\end{titlepage}

\label{--------------------------------------------sec: intro---------------------------------------------------}
\section{Introduction}

The obnoxious facility location is a well-known topic in the operations research community. This paper addresses a variant of the problem, namely, the continuous obnoxious facility location on a line segment (\textsc{COFL}) problem, motivated by the following application. We wish to find locations for establishing $k$ obnoxious or undesirable facilities (such as garbage dump yards, industries generating pollution, etc.) along a straight highway road such that the pairwise distance between these new facilities and the distance between each of the new facilities and other existing non-obnoxious facilities (such as hospitals, schools, etc.) are maximized. The formal definition of the problem is below:

\vspace*{1mm}
\noindent The \textsc{COFL} problem: Given a horizontal line segment $\overline{pq}$ and an ordered set $P$ of $n$ points lying above a line through $\overline{pq}$ in the Euclidean plane, we aim to locate points on $\overline{pq}$ for centering $k$ non-overlapping congruent disks of maximum radius $r_{max}$ such that none of the points in $P$ lie interior to any of these disks.
\vspace*{1mm}

\noindent Recently, it has been shown in \cite{Sing2021} that we can solve the decision version of this problem (in which we are also given a radius $L$ as input along with $\overline{pq}$ and the set $P$ of $n$ points) in time $O(n+k)$ if the points in $P$ are given in order. Then, using this decision algorithm as a subroutine, an $(1-\epsilon)$-approximation algorithm ({\tt FPTAS}) has been given to solve the \textsc{COFL} problem in time $O((n+k)\log{\frac{||pq||}{2(k-1)\epsilon}})$, where $\epsilon>0$ and $||pq||$ is the length of the segment $\overline{pq}$ \cite{Sing2021}. 

In this work, we show that we can, in fact, solve the \textsc{COFL} problem exactly in polynomial time. Using the linear-time decision algorithm of \cite{Sing2021} again as a subroutine, we present two polynomial-time exact algorithms based on two different approaches: (i) the algorithm is based on doing a binary search on all candidate radii $L$ computed explicitly and runs in $O((nk)^2\log{(nk)}+(n+k)\log{(nk)})$ time, and (ii) the algorithm is based on Megiddo's parametric search \cite{MG1983}, and runs in $O((n+k)^2)$ time. We then discuss an $O(n\log^2 n)$-time algorithm to solve the \textsc{COFL} problem for $k=2$, which is faster than the previous two algorithms. Finally, we show how to solve the circular \textsc{COFL} problem using the algorithm of \cite{Sing2021}.

\section{Related Work}
Several variants of obnoxious facility location problems have been investigated by the operations research and computational geometry communities. A recent review of many variants of obnoxious facility location problems and specialized heuristics to solve them may be found in \cite{Chur2022}. Kartz et al. \cite{Katz2002} studied two variants of $k$-obnoxious facility location problems (i) max-min $k$-facility location, in which the distance between any two facilities is at least some fixed value and the minimum distance between a facility and a demand point is to be maximized, and (ii) min-sum one-facility location problem in which the sum of weights of demand points lying within a given distance to the facility is to be minimized. They solved the first problem in $O(n\log^2{n})$ time for $k=2$ or $3$ and for any $k\geq 4$ their algorithm runs in $O(n^{(k-2)}\log^2{n})$ time for rectilinear case. In \cite{Katz2002}, they also solved the Euclidean case where their algorithm runs in $O(n\log{n})$ time for $k=2$. For the second problem, their algorithm runs in $O(n\log{n})$ time (rectilinear case) and $O(n^2)$ time (Euclidean case). Qin et al. \cite{Qin2000} studied a variant of  $k$-obnoxious facility location problem in which the facilities are restricted to lie within a given convex polygonal domain. They proposed a 2-factor approximation algorithm based on a Voronoi diagram for this problem, and its running time is $O(kN\log{N})$, where $N=n+m+k$, $n$ denotes the number of demand points, $m$ denotes the number of vertices of the polygonal domain and $k$ is the number of obnoxious facilities to be placed. D{\'\i}az-B{\'a}{\~n}ez et al. \cite{Diaz2003} modelled obnoxious facility as empty circular annulus whose width is maximum. This refers to a max-min facility location problem such that the facility is a circular ring of maximum width, where the width is the absolute difference between the radii of the two concentric circles. They solved the problem in $O(n^3\log{n})$ time, and if the inner circle contains a fixed number of points then the problem can be solved in $O(n\log{n})$ time. Maximum width empty annulus problems for axis parallel square and rectangles can be solved in $O(n^3)$ and $O(n^2\log{n})$ time respectively \cite{Bae2021}. Abravaya and Segal \cite{Abra2010} studied the problem of locating maximum cardinality set of obnoxious facilities within a bounded rectangle in the Euclidean plane such that their pairwise distance is at least a given threshold. They proposed a 2-approximation algorithm and also a PTAS based on shifting strategy \cite{Hoch1985}. Agarwal et al. \cite{Agar1994} showed the application of Megiddo’s parametric search \cite{MG1983} method to solve several geometric optimization problems in the Euclidean plane such as the big stick problem, the minimum width annulus problem, and the problem of finding largest mutual visible spheres.

\section{Polynomial time exact algorithms}
Given a horizontal segment $\overline{pq}$, without loss of generality, we can assume that all the points in $P$ lie in the upper half-plane defined by a line through $\overline{pq}$. After fixing the segment $\overline{pq}$, we define the following decision version of the \textsc{COFL} problem as follows:

 \begin{enumerate}
 \item[$ $]\textsc{Dcofl}$(P, k, L)$: Given an ordered set $P$ of $n$ points lying above a line through $\overline{pq}$, a real number $L$ and an integer $k$, can we pack $k$ non-overlapping disks of radius $L$ centered on $\overline{pq}$ such that none of these disks contains a point of $P$ in their interior?
  \end{enumerate}
\begin{algorithm}[tb]
  \caption{Greedy\_LPacking$(P, k, L)$}\label{alg1}
	
\SetAlgoLined	  
 Compute $I$ \\
  Compute $I^\mathsf{c}$ where the elements are ordered from $p$ to $q$, and let $m=|I^\mathsf{c}|$ \\
	   $j\leftarrow 0$ \\
	  		\For{each $i\leftarrow 1$ to $m$}{
				$\gamma \leftarrow \bigr\lfloor \frac{\text{length of } i \text{th interval of }I^\mathsf{c}}{2L} \bigr\rfloor$ \\
				\If{$(j+\gamma+1)\leq k$}{
				     On the $i$th interval of  $I^\mathsf{c}$, pack the disks $d_{j+1},d_{j+2},\dots,d_{j+\gamma+1}$ of radius $L$ \\ 
				     update $j\leftarrow (j+\gamma+1)$\\
				     update $I^\mathsf{c}$ such that the distance between the left end point $r_i$ of the left most interval $[r_i,l_{i+1}]$ in $I^\mathsf{c}$ and the right most point of $\partial d_j$ on $\overline{pq}$ is at least $2L$.\\
				     \If{$(j=k)$}{\textbf{break}}}
				\Else{On the $i$th interval of  $I^\mathsf{c}$, pack the disks $d_{j+1},d_{j+2},\dots,d_{k}$ of radius $L$ \\ 
			 update $j\leftarrow k$\\
			 \textbf{break}}}

\If{ $(j=k)$}{
         \Return{$(\textsc{yes},\ \{ d_1, d_2, \ldots, d_k\})$}}
         \Else{ 
         \Return{$(\textsc{no},\ \emptyset)$}}
      
\end{algorithm}	
  
  Let $I=\{ [x_{i,1}, x_{i,2}]| p_i\in P\}$ be the set of intervals on $\overline{pq}$ ordered from $p$ to $q$, where every point in each interval $[x_{i,1}, x_{i,2}]$ is at distance at most $L$ from $p_i\in P$ (see Fig. \ref{figp}). The decision algorithm returns \textsc{yes} if we can place $k$ pairwise interior-disjoint disks of radius $L$, centered on $\overline{pq}$ but outiside the intervals in $I$. Otherwise the algorithm returns \textsc{no}. Let $I^\mathsf{c}=\{ [x_{i,2}, x_{j,1}]| p_i, p_j \in P, i<j\}\cup \{[x_{0,2}, x_{1,1}],[x_{n,2}, x_{n+1,1}]\}$ be the set of complemented intervals of $I$, where every point in each interval $[x_{i,2}, x_{j,1}]\in I^\mathsf{c}$ is at distance at least $L$ from every point in $P$ and $x_{0,2}=x(p)$, $x_{n+1,1}=x(q)$. We use the algorithm of \cite{Sing2021} for solving the decision version of the problem. For completeness and for the purpose of describing exact algorithms here, we have reproduced that decision algorithm again here (see Algorithm \ref{alg1}). Let ${\cal R}$ be the the Minkowski sum of $\overline{pq}$ and a disk of radius $L$, i.e., the union of $\overline{pq}$ and the region swept by a disk of radius $L$ when its center moves along $\overline{pq}$.
\begin{figure}[!ht]
\begin{centering}
\includegraphics[scale=0.6]{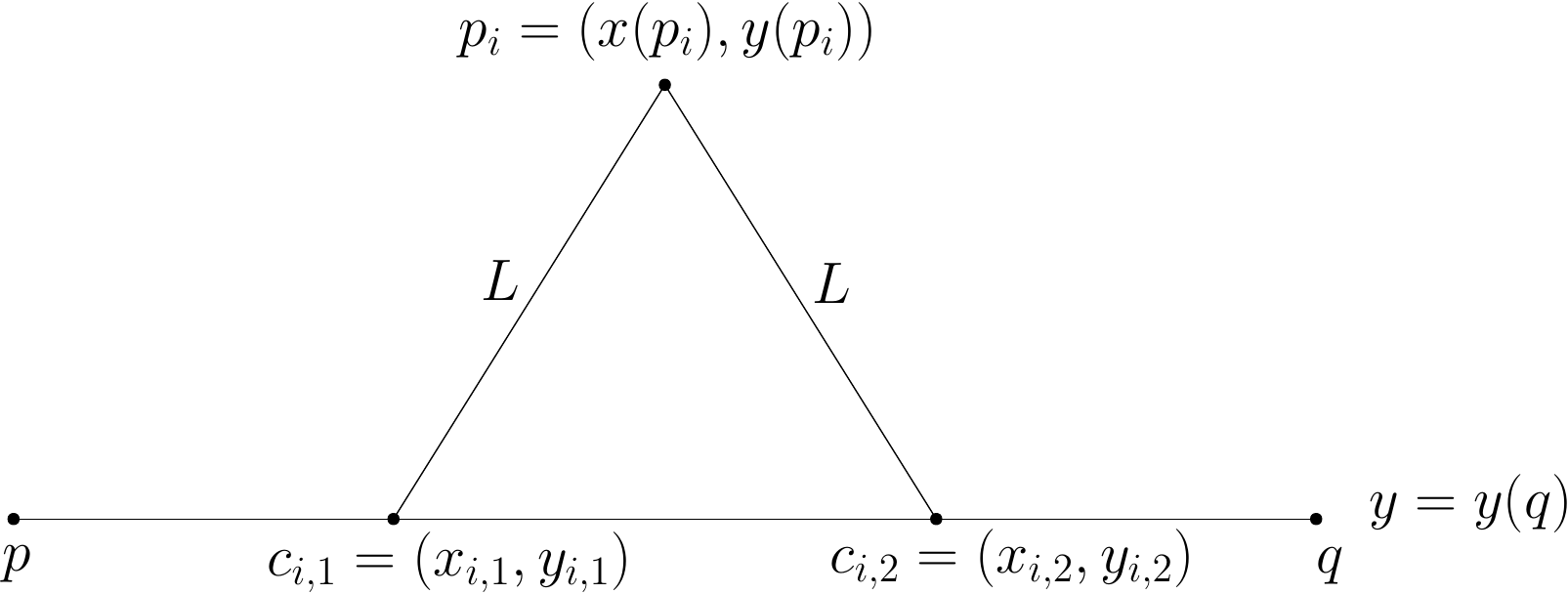}
\caption{A point $p_i$ is having two center points of $\overline{pq}$ which are at unknown distance $L$}
\label{figp}
\end{centering}
\end{figure}
  
\subsection{Algorithm based on binary search on all candidate radii values}
The approach that we follow here is that we first compute the set ${\cal L}_{\ell_2}$ (resp. ${\cal L}_{\ell_{\infty}}$) of all candidate radii $L$, based on some observations on optimal packing, where ${\cal L}_{\ell_{\infty}}$ and ${\cal L}_{\ell_2}$ are for rectilinear space and Euclidean space respectively. We sort ${\cal L}_{\ell_2}$ (resp. ${\cal L}_{\ell_{\infty}}$), then invoke Algorithm \ref{alg1} by setting $L$ to some element in ${\cal L}_{\ell_2}$ (resp. ${\cal L}_{\ell_{\infty}}$). Ultimately, we find $r_{max}$ by doing binary search on the ordered elements in ${\cal L}_{\ell_2}$ (resp. ${\cal L}_{\ell_{\infty}}$).

To get familiarity with this approach, we first discuss this for the rectilinear version of the \textsc{COFL} problem (in $\ell_1$ or $\ell_{\infty}$ metric). We then apply the approach to solve the (Euclidean) \textsc{COFL} problem (i.e., in $\ell_2$ metric). We now define the rectilinear version of the problem as follows:
\begin{problem}
 Given a horizontal line segment $\overline{pq}$, an ordered set $P$ of $n$ demand points, and an integer $k$, pack $k$ maximum-size axis-aligned squares centered on $\overline{pq}$ such that no point of $P$ lies in the interior of any of these squares, where the size of a square is defined to be half of its side length.
\end{problem}

The optimal size of the square is determined by searching all possible candidate sizes as discussed below:\\
Consider an instance of the \textsc{COFL} problem, i.e., a line segment $\overline{pq}$, an ordered set $P$ of $n$ points, and an integer $k$. As in the previous section, without loss of generality, let all the points of $P$ be lying above the line through the segment $\overline{pq}$. Now, we need to place $k$ non-overlapping squares of maximum size with centers lying on $\overline{pq}$ without violating the constraint that no point of $P$ lies inside any of these squares. The maximum size of the square depends on some "{\it influencing points}" in $P$ for the given position of the fixed horizontal line segment $\overline{pq}$. 
\begin{definition}
The influencing points ${P}^{inf}\subseteq P$ are those points which satisfy the following criteria: given any point $p_i\in {P}^{inf}$ (possibly along with another point $p_j\in {P}^{inf}$) we can place $k$ pairwise disjoint congruent squares centered on $\overline{pq}$ such that (i) $p_i$ (and $p_j$) lies on the boundary of some of these squares, and (ii) if $p_i$  (or $p_j$ or both) is removed from $P$, then the size of these squares may be increased while their centers are perturbed on $\overline{pq}$ to keep their pairwise disjointness intact, and (iii) no other points of $P$ lie in the interior of the squares all the time.
\end{definition}

\begin{lemma}\label{linfinity}
 If the position of $\overline{pq}$ is fixed and an ordered set $P$ of $n$ points are not lying strictly below the line through $\overline{pq}$, then for any optimal solution to the \textsc{COFL} problem in $\ell_{\infty}$ metric, there are at most two points in ${P}$ that will determine the optimal radius $r_{max}$ for the disks (axis-alinged squares with size $r_{max}$) in the packing, and also $|{\cal L}_{\ell_{\infty}}|=O(n^2k^2)$.
\end{lemma}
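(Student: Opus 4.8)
The plan is to recast the optimization as a threshold computation and then localize exactly what pins the threshold. First I would fix coordinates so that $\overline{pq}$ lies at height $y_0$ and every $p_i\in P$ has $h_i:=y(p_i)-y_0\ge 0$ (this is where the hypothesis ``not lying strictly below'' is used). In the $\ell_\infty$ metric, the set of center positions on $\overline{pq}$ that would place $p_i$ in the interior of a size-$L$ square is exactly the open interval $(x(p_i)-L,\,x(p_i)+L)$ when $h_i<L$, and is empty when $h_i\ge L$; hence each point contributes a forbidden interval of width $2L$ that switches on precisely at $L=h_i$. Writing $N(L)$ for the number of squares Algorithm~\ref{alg1} packs (the sum over the free intervals of $I^{\mathsf c}$ of $\floor{\ell_j(L)/(2L)}+1$, where $\ell_j(L)$ is the length of the $j$th free interval), the monotonicity underlying the binary search of \cite{Sing2021} gives that $N(L)$ is a non-increasing step function of $L$, so $r_{max}=\sup\{L:N(L)\ge k\}$ is a well-defined threshold with $N(r_{max})=k$.

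Next I would prove the structural claim that at most two points pin $r_{max}$. Since $N$ is a non-increasing integer step function that drops below $k$ immediately above $r_{max}$, there is a downward jump of $N$ at $L=r_{max}$, and this jump is caused by at least one summand: either a free interval $j^\ast$ whose count $\floor{\ell_{j^\ast}(L)/(2L)}$ decreases, or a point whose forbidden interval switches on at $L=h_i$. In the first case $j^\ast$ is delimited on each side by either a segment endpoint or an \emph{active} point (one with $h_i<r_{max}$), so at most two points of $P$ bound it; for an interior interval bounded by active points $p_i,p_j$ one computes $\ell_{j^\ast}(L)=(x(p_j)-x(p_i))-2L$, so that $\floor{\ell_{j^\ast}(L)/(2L)}+1=\floor{(x(p_j)-x(p_i))/(2L)}$, and the jump occurs when $(x(p_j)-x(p_i))/(2L)$ meets an integer $m$, giving $r_{max}=(x(p_j)-x(p_i))/(2m)$, a quantity determined by the two points $p_i,p_j$ (a boundary interval replaces one of them by $x(p)$ or $x(q)$). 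In the second case a single point gives $r_{max}=h_i$. Either way $r_{max}$ is a function of at most two points of $P$, which proves the first assertion and identifies the influencing set $P^{inf}$.

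Finally I would bound $|\mathcal{L}_{\ell_\infty}|$ by collecting every radius the case analysis can produce. Each influencing point, paired with the integer count of squares packed between it and its neighbouring obstacle (at most $k$), yields $O(nk)$ candidate values, and its height $h_i$ contributes $O(n)$ more. Since $r_{max}$ is governed by at most two such points acting as a left delimiter and a right delimiter, each independently either a point (with its multiplicity) or a segment endpoint, the candidate set is the product of the two families, giving $|\mathcal{L}_{\ell_\infty}|=O(nk)\cdot O(nk)=O(n^2k^2)$. Concretely $\mathcal{L}_{\ell_\infty}$ collects the values $(x(p_j)-x(p_i))/(2m)$, the boundary analogues $(x(p_i)-x(p))/(2m-1)$ and $(x(q)-x(p_i))/(2m-1)$, and the heights $h_i$, ranging over all $i,j$ and $1\le m\le k$.

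The step I expect to be the crux is the localization in the second paragraph: arguing that the passage of the global quantity $N(L)$ through the value $k$ is attributable to a single free interval delimited by at most two points, rather than to the simultaneous shrinking of many intervals at once. The clean way around this is to exploit that $N$ is a sum of independent per-interval step functions of $L$, so a strict drop of the sum forces a drop in at least one summand; isolating one responsible summand suffices, and even if several summands drop at $r_{max}$, each is already determined by at most two points, so the lemma's conclusion (that $r_{max}$ is expressible from two points) holds without needing the binding configuration to be unique.
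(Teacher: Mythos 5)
Your proof is correct, but it takes a genuinely different route from the paper's. The paper argues geometrically about an optimal packing: it defines the influencing points, takes the leftmost point of $P$ lying on a square boundary, and runs a case analysis over which edge (right, top, left) of which square $s_j$ that point and a possible second point lie on, reading off a candidate-radius formula in each configuration (Cases 0--4). You instead analyze the greedy counting function $N(L)$, observe it is a non-increasing integer step function, identify $r_{max}$ as a breakpoint where $N$ falls below $k$, and enumerate all possible breakpoints: the ratios $(x(p_j)-x(p_i))/(2m)$, the boundary ratios with odd denominators, and the activation heights $h_i$. Both routes produce essentially the same candidate families --- your even-denominator formula for point pairs is in fact the corrected version of the paper's Case 3(ii)(b), whose stated denominator $2j-1$ appears to be an index slip --- and both give $|{\cal L}_{\ell_{\infty}}|=O(n^2k^2)$; your explicit list is actually of size $O(n^2k)$, slightly sharper. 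What your approach buys is rigor and economy: every candidate is tied to a discontinuity of the exact quantity the decision algorithm computes, with no informal ``without loss of generality the squares are compactly packed'' steps, and the attribution of the drop to a single summand cleanly yields the two-point claim. What it costs is generality, and here lies the one step you should make explicit: the decomposition $N(L)=\sum_j\bigl(\lfloor\ell_j(L)/(2L)\rfloor+1\bigr)$ into \emph{independent} per-interval terms silently uses the $\ell_\infty$-specific fact that every forbidden interval has full width $2L$, so consecutive free intervals are separated by at least $2L$ and the step of Algorithm \ref{alg1} that trims $I^\mathsf{c}$ across intervals never binds. One sentence closes this. In the Euclidean metric the forbidden intervals have width $2\sqrt{L^2-h_i^2}<2L$, the summands are no longer independent, and the breakpoints become roots of algebraic equations rather than rational ratios --- which is why the paper's configuration-based case analysis, and not your threshold argument, is the template that carries over to Lemma \ref{leuclidean}.
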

\begin{proof}
The proof follows from the following cases:
\begin{itemize}
\setlength{\itemindent}{1.5em}
\item[\textbf{Case 0:}] If ${P}^{inf}$ is empty, then no points of $P$ are at a distance of $\frac{||pq||}{2(k-1)}$ from the line segment $\overline{pq}$. This means that none of the points of $P$ lie on the boundary of the $k$ squares in the optimal packing. Therefore, we have that the maximum size of the square is $\frac{||pq||}{2(k-1)}$.
\end{itemize}
If ${P}^{inf}$ is non-empty, then there will be at least one point on the boundary of any of the $k$ squares in the optimal packing.

First, we will discuss how to find the set ${P}^{inf}\subseteq P$ of all the {\it influencing points}. Let $s_1, s_2, \ldots, s_k$ be the squares in the optimal solution. Let the point $p'\in P$ be the leftmost point among all the points of $P$ that lie on the boundaries of the squares $s_j$ (where $j=1,2,\ldots,k$) in the optimal packing (in case of ties, we may consider the one with largest $y$-coordinate). We will see that this point $p'\in {P}^{inf}$ in many cases, and sometimes $p'\notin {P}^{inf}$ (this means that $p'$ is not involved in determining $r_{max}$ and it may be that some other two points $p_i,p_{i'}\in {P}^{inf}$ on the right of $p'$ determines the radius $r_{max}$). Now, assume that $p'$ participates in determining $r_{max}$, let $p_i=p'$. Then, $p_i$ may lie on the right edge, or the left edge or the top edge boundaries of some square $s_j$. Based on these, we have the following cases.

\begin{itemize}
\setlength{\itemindent}{1.5em}
\item[\textbf{Case 1:}] Assume that $p_i$ lies on the right edge of $s_j$ such that its position determines the size of the square (see Fig. \ref{figob5}). For each $j=1,2, \ldots, k$ in the optimal packing, $p_i$ can lie on the right edge of $s_j$. For each such point $p_i\in {P}^{inf}$ the candidate radius $r_{\textsc{can}}=\frac{x(p)-x(p_i)}{2j-1}$ can be calculated and stored in ${\cal L}_{\ell_{\infty}}$. Hence, the number of candidate radii in this case is $O(nk)$.  

\begin{figure}[!htb]

\begin{centering}
     \includegraphics[width=.8\linewidth]{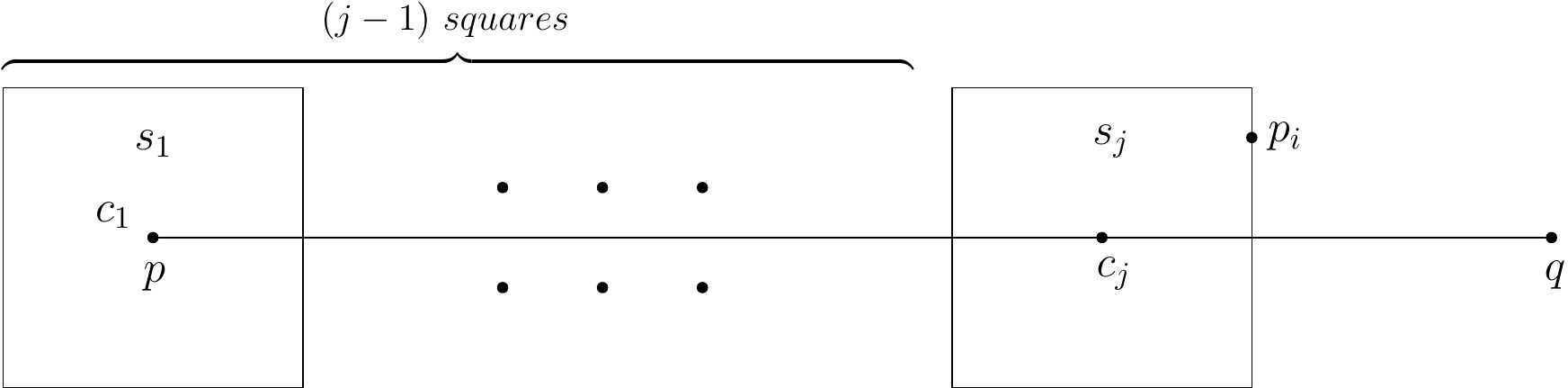}
     \caption{Illustration of case 1}\label{figob5}
     \end{centering} 
\end{figure}
\item[\textbf{Case 2:}] If $p_i$ lies on the top edge of $s_j$, then there are three subcases:
\begin{itemize}
\item[(i)] There is another point $p_{i'}\in P$ that lies on the right edge of $s_j$ (see Fig. \ref{figobj4c1}) or on the right edge of $s_{j'}$, where $j'\geq j+1$ (see Fig. \ref{figobj4c2}). Then if this pair $p_{i}$, $p_{i'}$ determines the maximum size of the square then the candidate radius $r_{\textsc{can}}=y(p_i)-y(q)\in {\cal L}_{\ell_{\infty}}$ for every pair $p_{i}, p_{i'}\in {P}^{inf}$. Note that this can be done exhaustively for every pair of points in $P$ and for very possible value of $j$ by checking that no other point of $P$ lies inside any of these $j$ squares.
 \begin{figure}[!htb] 
    \begin{centering}
     \includegraphics[width=.8\linewidth]{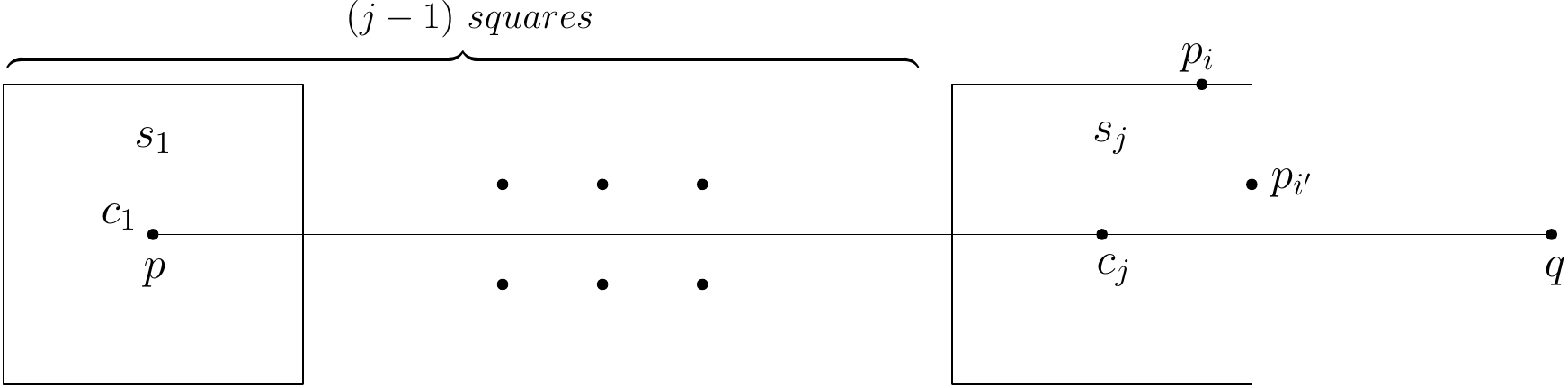}
     \caption{Illustration of case 2(i) with $j'=j$}\label{figobj4c1}
\end{centering} 
 \end{figure}
 
 \begin{figure}[!htb] 
    \begin{centering}
     \includegraphics[width=.8\linewidth]{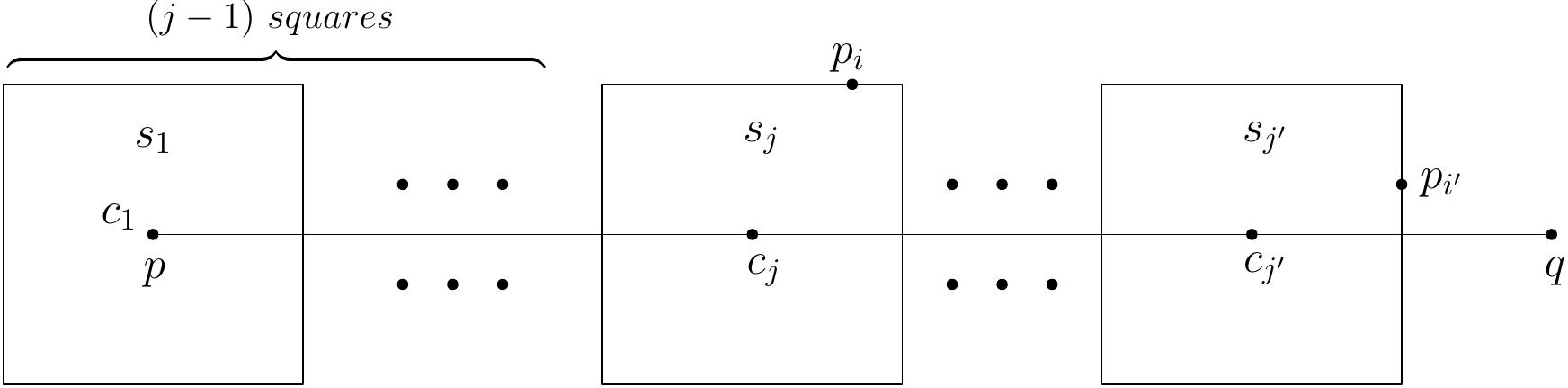}
     \caption{Illustration of case 2(i) with $j'\geq j+1$}\label{figobj4c2}
\end{centering} 
 
\end{figure}

\item[(ii)] If there is no point lying on the right edge of $s_{j'}$ for any $j'=j+1, j+2, \ldots, k$ (see Fig. \ref{figob4}) then this case will be same as \textbf{Case 0} if $( \bigcup_{j=1}^{k}s_j)\cap[p,q]=[p,q]$. Otherwise, $r_{\textsc{can}}=y(p_{i})-y(q)$, where $i \in \{1,2,\dots,n\}$ is the index of the point in $P$ with the smallest $y$-coordinate such that $r_{\textsc{can}}/2+x(p)\leq x(p_{i})\leq x(q)+r_{\textsc{can}}/2$. 
\end{itemize}

 \begin{figure}[!htb] 
    \begin{centering}
     \includegraphics[width=.8\linewidth]{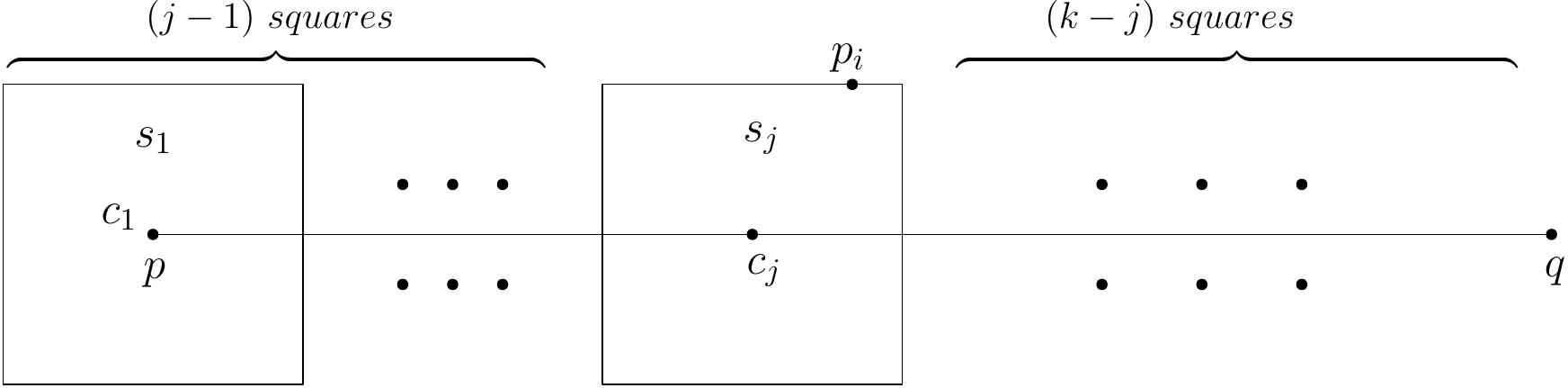}
     \caption{Illustration of case 2(ii) with no $p_{i'}$}\label{figob4}
\end{centering} 
 
\end{figure}

\item[\textbf{Case 3:}] Now assume that the point $p_i$ lies on the left edge of $s_j$ for $j=1,2, \ldots, k$ (see Fig. \ref{figobj6c1}). Observe that if $p_i$ lies on the left edge of $s_j$ where $j\geq 2$, then without loss of generality we can assume that all the first $j-1$ squares are compactly packed together and the right edge of $s_{j-1}$ coincides with the left edge of $s_j$. Thus, this is same as one of the above cases as the point $p_i$ lies on the right edge of $s_{j-1}$. Hence, in the remaining cases we can assume that $p_i$ always lies on the left edge of $s_1$.

 \begin{figure}[!htb] 
    \begin{centering}
     \includegraphics[width=.8\linewidth]{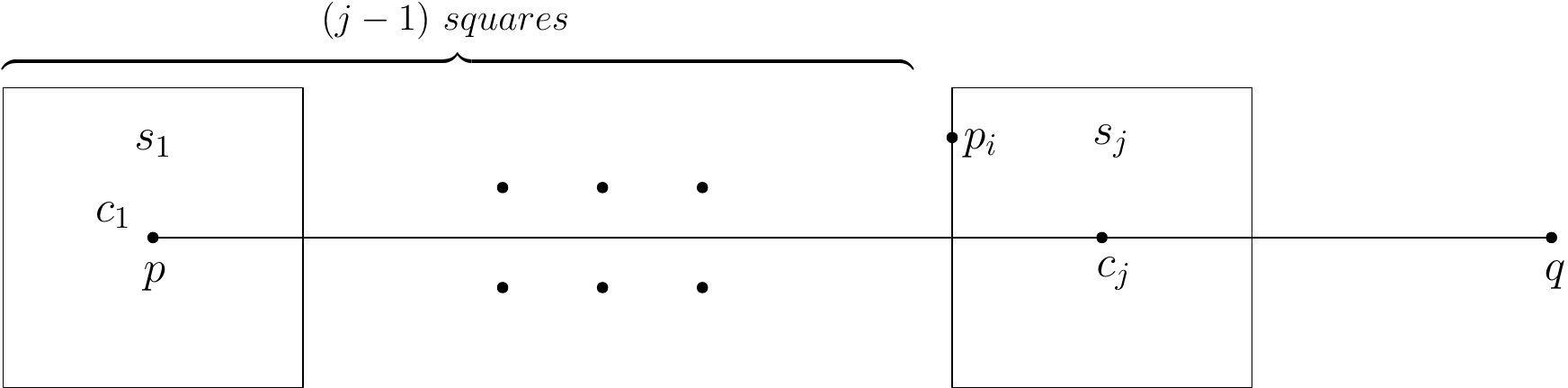}
     \caption{$p_i$ lies on the left edge of the square $s_j$}\label{figobj6c1}
\end{centering} 
 
\end{figure}
\begin{itemize}
 \item[\textbf{(i)}] There is no point $p_{i'}$ lying on either the top edge or right edge of any of the squares $s_{j'}$ for $j'=2,3, \ldots, k$. This is essentially the same case as \textbf{Case 0} except that the segment $\overline{pq}$  is shrunken to be of length $||pq||-(x(p_i)-x(p)+r_{\textsc{can}})$ and $s_1$ is centered at the left end point of this shrunken segment. Then, $r_{\textsc{can}}=\frac{||pq||-x(p_i)+x(p)}{2k-1}\in {\cal L}_{\ell_{\infty}}$.
 \item[\textbf{(ii)}] If $p_i$ lies on the left edge of the square $s_1$ then based on the position of the point $p_{i'}$ $(i'>i)$  that lies on the square $s_1$ or the square $s_{j'}$ $(j'>j=1)$ on the right side of $p_i$ we have the following situations.
 \begin{itemize} 
\item[(a)]When $p_{i'}$ lies on the right edge of $s_1$. Then, $r_{\textsc{can}} = \frac{x(p_{i'})-x(p_i)}{2} \in {\cal L}_{\ell_{\infty}}$.

 \begin{figure}[!htb] 
    \begin{centering}
     \includegraphics[width=.8\linewidth]{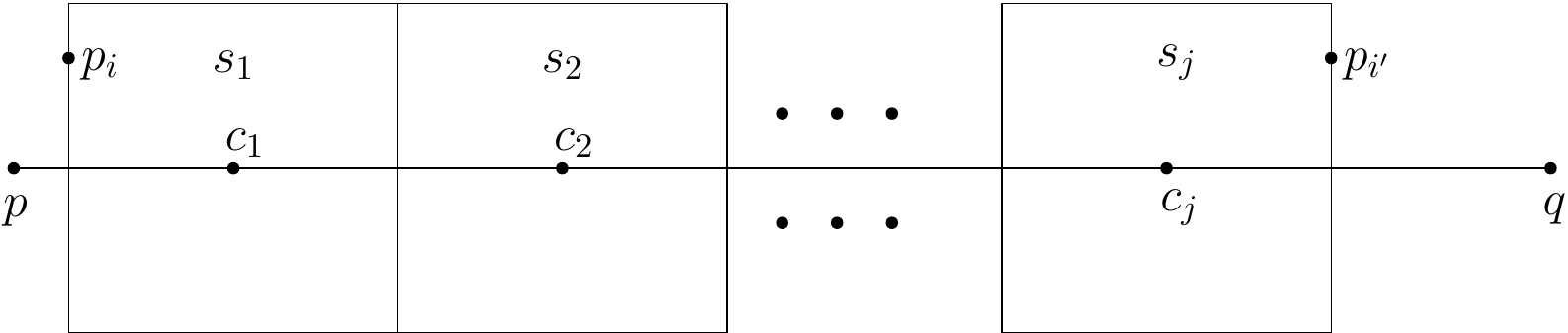}
     \caption{Illustration of case 3(ii)(b)}\label{figobj6c3}
\end{centering} 
 \end{figure}
\item[(b)] When $p_{i'}$ lies on the right edge of the square $s_{j'}$ (see Fig. \ref{figobj6c3}). Then, $r_{\textsc{can}}=\frac{x(p_{i'})-x(p_{i})}{2j-1}\in {\cal L}_{\ell_{\infty}}$, $j'=2,3,\dots,k$.

\item[(c)] When $p_{i'}$ lies on the top edge of the square $s_j$ (see Fig. \ref{figobj6c4}). Then, $r_{\textsc{can}}=y(p_{i'})-y(p)\in {\cal L}_{\ell_{\infty}}$. It is similar to \textbf{Case 2(i)}.
\end{itemize}
\item[\textbf{Case 4:}] Now assume any two distinct points $p_i, p_{i'} \in P$ being located such that they lie on the adjacent sides of the same square $s_j \ (j=j')$ (see Fig. \ref{figobj6c4f}). Then, if we consider the size of such a square for each pair of points in $P$, there are at most $O(n^2)$ candidate radii $r_{\textsc{can}}$ in  ${\cal L}_{\ell_{\infty}}$.
\end{itemize}

Observe that all the remaining cases can be reduced to one of the above cases. Since there are only a constant number $c$ of possible positions of the two squares $s_j,s_{j'}$ in the optimal packing for a fixed pair $p_i, p_{i'} \in P^{inf}$, the number of candidate radii $r_{\textsc{can}}$, i.e., $|{\cal L}_{\ell_{\infty}}|={n\choose 2}\sum\limits_{j=1}^k\sum\limits_{j'=j}^k c=O(n^2k^2)$. Hence the cardinality of ${\cal L}_{\ell_{\infty}}$ is $O((nk)^2)$, we can compute ${\cal L}_{\ell_{\infty}}$ in $O((nk)^2)$ worst case times
\end{itemize}
\end{proof}
   
   \begin{figure}[!htb] 	
    \begin{centering}
     \includegraphics[width=.8\linewidth]{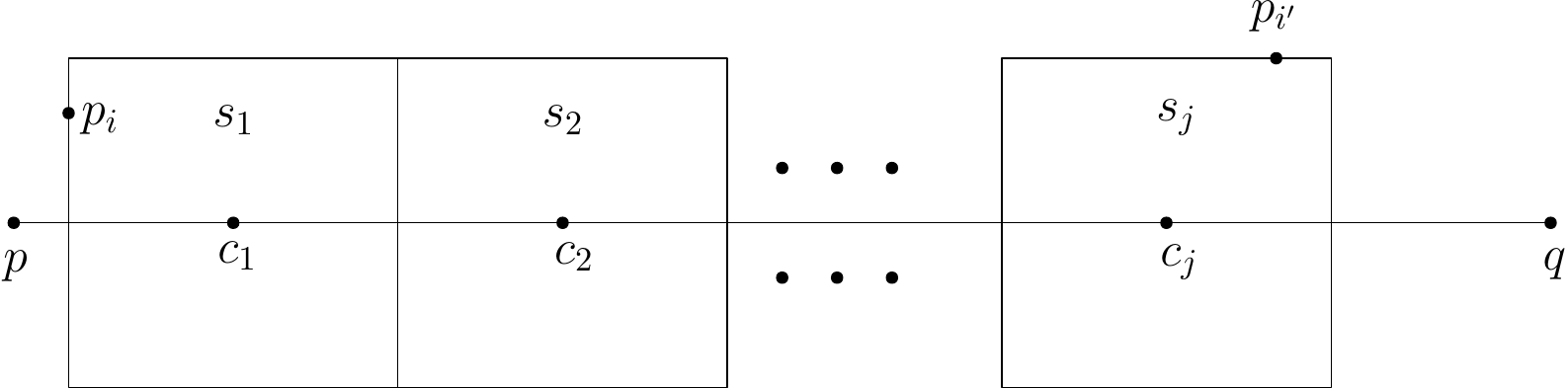}
     \caption{Illustration of case 3(ii)(c)}\label{figobj6c4}
\end{centering} 
\end{figure}
  \begin{figure}[!htb] 
    \begin{centering}
     \includegraphics[width=.8\linewidth]{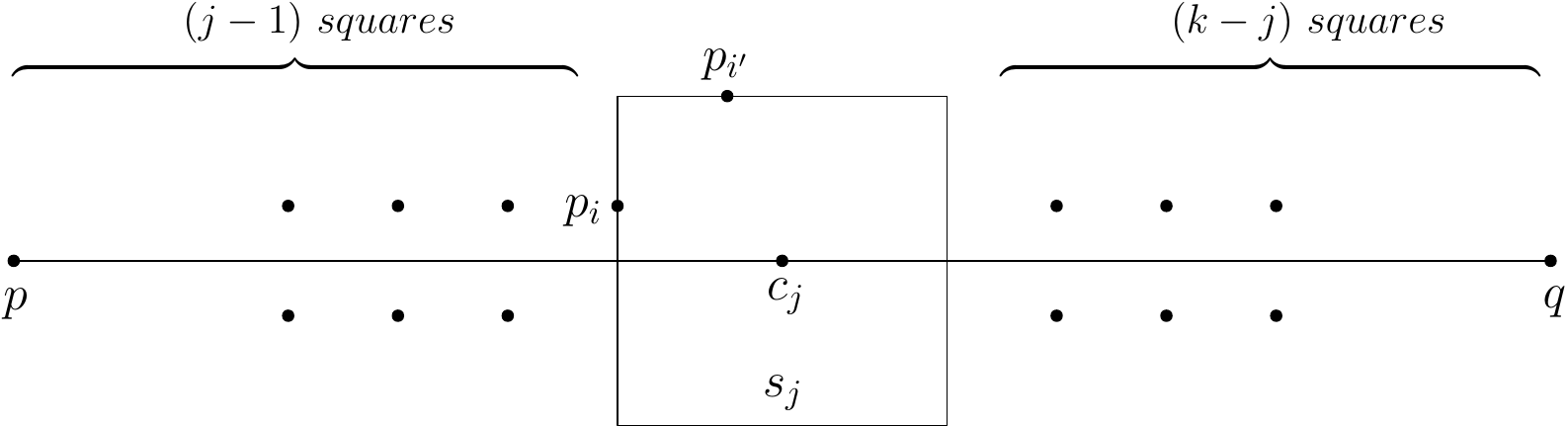}
     \caption{Illustration of case 4}\label{figobj6c4f}
\end{centering} 
 \end{figure}

Now, sort the elements in ${\cal L}_{\ell_{\infty}}$. By doing binary search on the sorted ${\cal L}_{\ell_{\infty}}$, solve the \textsc{Dcofl}$(P, k, L)$ problem repeatedly (invoke Algorithm \ref{alg1}) by setting $L$ to the median $r_{\textsc{can}} \in {\cal L}_{\ell_{\infty}}$ each time. Since the set of candidate radii computed is exhaustive, the above search guarantees to find $r_{\textsc{can}}= r_{max}$. Hence we have the following theorem.
\begin{theorem}\label{th3}
For a given line segment $\overline{pq}$ and an ordered set $P$ of $n$ points in the rectilinear plane, we can solve the \textsc{COFL} problem optimally in $O((nk)^2\log{(nk)}+(n+k)\log{(nk)})$ time.
\end{theorem}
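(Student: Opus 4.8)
The plan is to combine the structural result of Lemma~\ref{linfinity} with the linear-time decision procedure of Algorithm~\ref{alg1} through a binary search, and then simply sum the costs of the three resulting phases. First I would invoke Lemma~\ref{linfinity} to compute the candidate set ${\cal L}_{\ell_{\infty}}$ explicitly; the lemma guarantees both that $|{\cal L}_{\ell_{\infty}}|=O((nk)^2)$ and that this set is assembled in $O((nk)^2)$ time, and---crucially for correctness---that the optimal size $r_{max}$ is realized by at most two points of $P$ and hence occurs as one of the enumerated candidates. Next I would sort ${\cal L}_{\ell_{\infty}}$, which costs $O((nk)^2\log((nk)^2))=O((nk)^2\log(nk))$ time.

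The correctness of the search rests on a monotonicity observation about \textsc{Dcofl}: if $\textsc{Dcofl}(P,k,L)$ returns \textsc{yes} for a radius $L$, then it also returns \textsc{yes} for every $L'\le L$, since any feasible packing of $k$ interior-disjoint squares of size $L$ stays feasible after uniformly shrinking each square to size $L'$ (both pairwise disjointness and the constraint that no point of $P$ lies in an interior are preserved under shrinking). Hence the feasibility predicate is a step function of $L$ that switches from \textsc{yes} to \textsc{no} exactly at $r_{max}$. Because $r_{max}\in{\cal L}_{\ell_{\infty}}$, the largest element of the sorted candidate list on which Algorithm~\ref{alg1} answers \textsc{yes} is precisely $r_{max}$, so a standard binary search recovers it.

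For the cost of the search itself, the sorted list has $O((nk)^2)$ entries, so the binary search performs $O(\log((nk)^2))=O(\log(nk))$ probes, each invoking Algorithm~\ref{alg1} once at the decision-algorithm cost of $O(n+k)$ established in \cite{Sing2021}; this phase contributes $O((n+k)\log(nk))$. Summing the three phases---$O((nk)^2)$ to build the candidates, $O((nk)^2\log(nk))$ to sort them, and $O((n+k)\log(nk))$ for the probes---and absorbing the construction term into the sorting term yields the claimed bound $O((nk)^2\log(nk)+(n+k)\log(nk))$.

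I expect the delicate point to be correctness rather than timing. The two facts that must be pinned down are, first, that the enumeration underlying Lemma~\ref{linfinity} is genuinely exhaustive, so that $r_{max}$ cannot slip strictly between two consecutive candidates and be missed; and second, that the monotonicity of \textsc{Dcofl} in $L$ holds, which is what licenses a binary search rather than a linear scan over the sorted candidates. Once these are in hand, the running-time accounting is a routine sum in which the $(nk)^2\log(nk)$ sorting term dominates the candidate-construction cost.
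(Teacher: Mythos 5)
Your proposal is correct and follows essentially the same route as the paper: compute the candidate set ${\cal L}_{\ell_{\infty}}$ via Lemma~\ref{linfinity}, sort it, and binary-search with Algorithm~\ref{alg1} as the decision oracle, with exactly the same time accounting. The only difference is that you explicitly justify the monotonicity of \textsc{Dcofl} in $L$ (which licenses the binary search), a point the paper leaves implicit here and only states later in its parametric-search section.
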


\begin{remark}
  The {\it COFL} problem for $k=2$, under the $\ell_{\infty}$ norm, can be solved in $O(n\log n)$ by employing the optimization technique of Frederickson and Johnson \cite{GF1984} instead of doubling search and bisection methods, because the problem will be a special case of the one in \cite{Katz2002}.
\end{remark}

\subsubsection*{The Euclidean \textsc{COFL} problem:}
The approach here is that we first compute the set ${\cal L}$ of all candidate radii $L$, based on some observations of possible positions of all $k$ disks in an optimal packing. We sort all the elements in ${\cal L}$. Then, by doing a binary search on the sorted ${\cal L}$, we invoke the decision algorithm (Algorithm \ref{alg1}) each time we peek an element at the middle index by setting $L$ to this element as the candidate radius. We continue this search until we find the radius $L^*$ (maximum radius) such that for all $L\in {\cal L}$ with $L> L^*$ the decision algorithms returns \textsc{no}. Hence, $L^*=r_{max}$.

Consider an instance of the \textsc{COFL} problem, i.e., a line segment $\overline{pq}$, an ordered set $P$ of $n$ points, and an integer $k$. As already assumed, without loss of generality, let all the points of $P$ be lying above the line through the segment $\overline{pq}$. Now, we need to place $k$ non-overlapping congruent disks of maximum radius with centers lying on $\overline{pq}$ without violating the constraint that no point of $P$ lies inside any of these disks. The maximum radius of the disks depends on some "{\it influencing points}" in $P$ for the given position of the fixed horizontal line segment $\overline{pq}$. 

\begin{definition}
The influencing points ${P}^{inf}\subseteq P$ are those points which satisfy the following criteria: given any point $p_i\in {P}^{inf}$ (possibly along with another point $p_{i'}\in {P}^{inf}$) we can place $k$ pairwise disjoint congruent disks centered on $\overline{pq}$ such that (i) $p_i$ (and $p_{i'}$) lies on the boundary of some of these disks, and (ii) if $p_i$  (or $p_{i'}$ or both) is removed from $P$, then the radius of these disks may be increased while their centers are perturbed on $\overline{pq}$ to keep their pairwise disjointness intact, and (iii) none of the other points of $P$ lie in the interior of any of these disks before and after the radius increases.
\end{definition}

\begin{lemma}\label{leuclidean}
 Given that the position of $\overline{pq}$ is fixed, for any optimal solution to the \textsc{COFL} problem in $\ell_{2}$ metric, there are at most two points in ${P}^{inf}\subseteq P$ that will determine the optimal radius $r_{max}$ of the disks in the corresponding packing, and also $|{\cal L}|=O(n^2k^2)$.
\end{lemma}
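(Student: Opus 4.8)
The plan is to mirror the case analysis of Lemma~\ref{linfinity}, replacing each axis-aligned square by a disk and each ``edge contact'' by a ``boundary contact''. The one structural change is that the forbidden sub-interval that a point $p_i$ excludes on $\overline{pq}$ is no longer of half-width $L-(\text{vertical distance})$ but of half-width $\sqrt{L^2-d_i^2}$, where $d_i=y(p_i)-y(p)$; that is, a disk of radius $L$ centered on $\overline{pq}$ has $p_i$ on its boundary exactly when its center is at $x(p_i)\pm\sqrt{L^2-d_i^2}$. I would begin, as before, with \textbf{Case 0}: if $P^{inf}=\emptyset$ then no point ever reaches a disk boundary, so the radius is limited only by the segment, forcing the single candidate $r_{max}=\frac{||pq||}{2(k-1)}$ obtained by packing $k$ equal disks with consecutive centers $2L$ apart.

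When $P^{inf}\neq\emptyset$ I would again take $p'$ to be the leftmost point lying on some disk boundary in the optimal packing and split according to whether $r_{max}$ is pinned by one point or by two. \emph{One binding point:} if the disks up to the contact are compactly packed against an endpoint, the center of $d_j$ is a known affine function of $L$, of the form $x(p)+(2j-1)L$ as in Case~1, so the contact condition $x(p)+(2j-1)L=x(p_i)-\sqrt{L^2-d_i^2}$ is a single equation that, after isolating the radical and squaring once, becomes a quadratic in $L$; its $O(1)$ roots are candidate radii, and ranging over all $p_i$, all $j\in\{1,\dots,k\}$, and the symmetric family packed against $q$ yields $O(nk)$ candidates. \emph{Two binding points:} if $p_i$ touches $d_j$ and $p_{i'}$ touches $d_{j'}$ with $d_j,\dots,d_{j'}$ compactly packed, the two centers are related by $c_{j'}=c_j+2(j'-j)L$, so equating the boundary expressions gives one equation $x(p_{i'})\pm\sqrt{L^2-d_{i'}^2}=x(p_i)\pm\sqrt{L^2-d_i^2}+2(j'-j)L$ in the single unknown $L$; clearing the two radicals by squaring twice produces a polynomial in $L$ of bounded degree, hence $O(1)$ candidates per configuration. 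Summing over the ${n\choose 2}$ point pairs, the $O(k^2)$ choices of $(j,j')$, and the constant number $c$ of side/compactness patterns gives $|{\cal L}|={n\choose 2}\sum_{j=1}^{k}\sum_{j'=j}^{k}c=O(n^2k^2)$.

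To see that two points always suffice, I would use the same degrees-of-freedom / exchange argument as in the rectilinear proof: the radius $L$ is a single real parameter, and between two consecutive boundary contacts optimality forces the intervening disks to be mutually tangent, since otherwise the slack would allow $L$ to grow while centers are repositioned on $\overline{pq}$ to preserve disjointness, contradicting optimality. Once this compact chain is fixed, two contacts rigidify the configuration and determine $L$, and any third simultaneously-tight point is redundant and can be dropped without changing $r_{max}$. The remaining sub-cases (contact on the near versus far side of a disk, the leftmost disk versus an interior one, and a pair of contacts on the same disk) reduce to the two families above exactly as Cases~1--4 of Lemma~\ref{linfinity} reduce to one another, so the enumeration is exhaustive.

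The step I expect to be the main obstacle is controlling the nonlinearity introduced by the radical $\sqrt{L^2-d_i^2}$. Two things need care: first, each algebraic root produced by squaring must be checked for geometric validity (correct sign of the radical, $p_i$ on the boundary rather than the interior, and all $k$ disks contained in ${\cal R}$), which I would handle by feeding every element of ${\cal L}$ through the decision routine Algorithm~\ref{alg1}; and second, I must verify that clearing the radicals in the two-point case raises the degree by only a constant, so that each configuration contributes $O(1)$ candidate radii and $|{\cal L}|$ stays at $O(n^2k^2)$ rather than blowing up. Everything else is a faithful translation of the $\ell_\infty$ argument.
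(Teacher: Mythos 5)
Your proposal matches the paper's proof essentially step for step: the empty-$P^{inf}$ case, the single-contact case with disks compactly packed against an endpoint (the paper's Case~2, including the mirror case at $q$), the two-contact case with a compact chain between the contacts and sign choices for the two radicals (Case~3), and the reduction of three-or-more contacts to these via the ``some maximal subsequence of disks must be compactly packed, else $L$ could grow'' argument (Case~4), with the identical count ${n \choose 2}\sum_{j=1}^{k}\sum_{j'=j}^{k} c = O(n^2k^2)$. One minor slip worth fixing: when the chain is packed against $p$, the center of $d_j$ is $x(p)+2(j-1)L$ (the first disk is centered at $p$ itself), not $x(p)+(2j-1)L$, which is the abscissa of the right extremity of $d_j$; this changes the coefficients of your quadratic but not the structure of the argument or the bound.
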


\begin{proof} The proof is based on case analysis. First, let $r_{\textsc{can}}\in {\cal L}$ be a candidate radius.\\
\textbf{Case 1:} The set ${P}^{inf}$ is empty and then $r_{\textsc{can}}=\frac{||pq||}{2(k-1)}$.

If this case is not satisfied, then there will be at least one point lying on the boundary of some disk in the optimal packing which will influence on the value of $r_{max}$. Among these points in $P$ lying on the boundaries of the optimal disks, let $p_i$ be the left most, i.e., the point with the smallest $x$-coordinate. Let $d_1, d_2, \ldots, d_k$ be the disks ordered from left to right in the optimal packing. Now we will examine all possible positions of these disks for every subset of at most two influencing points in ${P}^{inf}$. We will also show how to compute the corresponding candidate radii $r_{\textsc{can}}\in {\cal L}$. To this end, consider a disk $d$ centered on $\overline{pq}$. Divide the part of the boundary arc $\partial d$ of $d$ lying above $\overline{pq}$ into left and right arc segments by a vertical line through the center of $d$ (see Figure \ref{figl2arc}). We then show that at most two points $p_i,p_{i'}\in {P}^{inf}\subseteq P$ determine the radius $r_{max}$ of the disks in an optimal packing.

\begin{figure}[!htb]   
   \begin{centering}
       \includegraphics[width=.8 \linewidth]{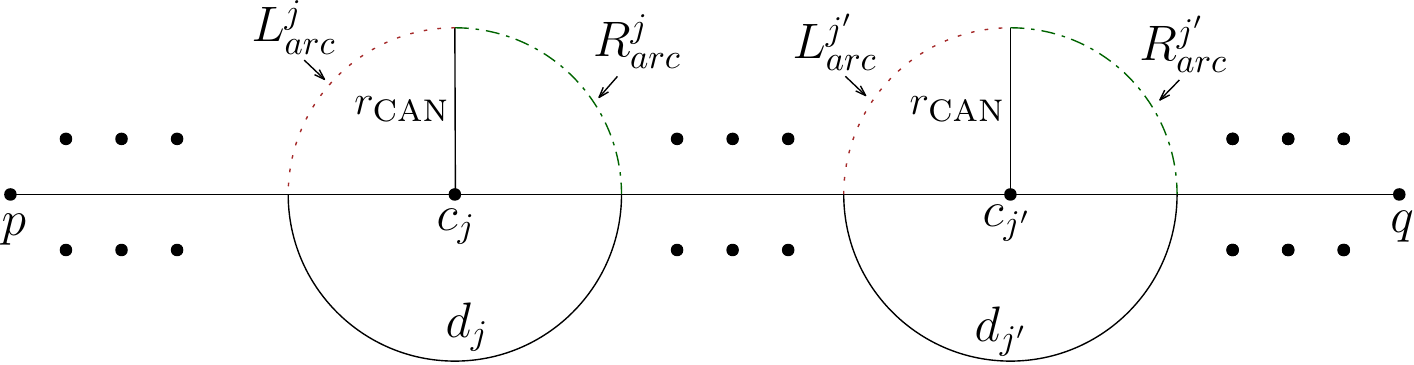}
     \caption{Left arc segment and right arc segment of disks $d_j$ and $d_{j'}$}\label{figl2arc}
\end{centering} 
\end{figure}
Let $d_j$ and $d_{j'}$ (possibly $j=j'$) be the two disks, on whose boundary arc segments $L_{arc}^j$ or $R_{arc}^j$, and $L_{arc}^{j'}$ or $R_{arc}^{j'}$ the two points $p_i$ and $p_{i'}$ lie. Based on all possible positions of $d_j$ and $d_{j'}$ for any pair $p_i$ and $p_{i'}$, we have the following cases.

\noindent\textbf{Case 2:} Assume that $p_i$ lies on $R_{arc}^j$ of $d_j$ such that its position determines the radius of the disk (there is no point $p_{i'}$ that influences the radius) (see Figure \ref{figl2c1}). For any $j=1,2,\dots,k$ in the optimal packing, $p_i$ can lie on $R_{arc}^j$ of $d_j$. For each such point $p_i\in P^{inf}$ the candidate radius $r_{\textsc{can}}$ can be calculated and stored in ${\cal L}$. Hence the number of candidate radii in this case is $O(nk)$. The candidate radius $r_{\textsc{can}}$ with respect to every point can be calculated using the below equation:
\begin{equation*}
\begin{split}
(j-1)2r_{\textsc{can}}-r_{\textsc{can}}+2r_{\textsc{can}} &= x(p_i)+\Big (r_{\textsc{can}}-\sqrt{r^2_{can}-(y(p_i)-y(q))^2}\Big)-x(p)\\
((j-1)2-1+2)r_{\textsc{can}}-r_{\textsc{can}} &= x(p_i)-\sqrt{r^2_{can}-(y(p_i)-y(q))^2}-x(p)\\
2(j-1)r_{\textsc{can}} &=x(p_i)-\sqrt{r^2_{can}-(y(p_i)-y(q))^2}-x(p)
\end{split}
\end{equation*}

The candidate radius $r_{\textsc{can}}\in {\cal L}_{\ell_{2}}$ is calculated from the above equation as we know the value of every term of the equation except $r_{\textsc{can}}$. The mirror case of this where the point $p_{i'}$ is the right most point and lies on $L_{arc}^{j'}$ of $d_{j'}$ can be handled similarly. In this case, $r_{\textsc{can}}$ can be calculated using the below equation.
\begin{equation*}
(2(k-j')+1)r_{\textsc{can}}= ||pq||-x(p_{i'})+x(p)+r_{\textsc{can}}-\sqrt{r_{\textsc{can}}^2-(y(p_{i'})-y(q))^2}
\end{equation*}
\begin{figure}[!htb]
\begin{centering}
	     \includegraphics[width=.8\linewidth]{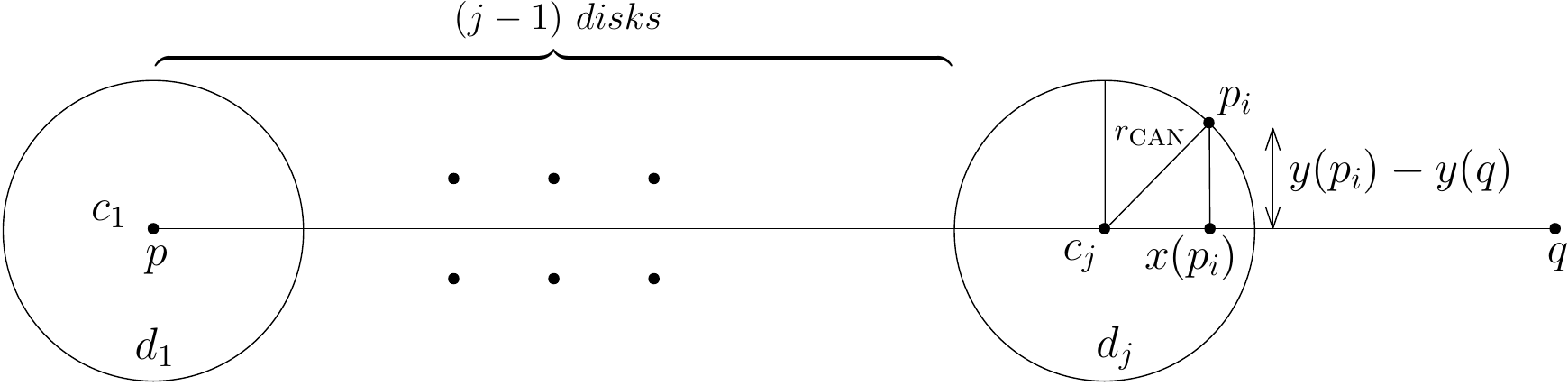}
     \caption{Illustration of case 2}\label{figl2c1}
     \end{centering} 
\end{figure}

\noindent\textbf{Case 3:} Let both the points $p_i,p_{i'} \in P$ be determining the radius of the disks in the optimal packing. Then the point $p_i$ can lie on $L_{arc}^j$ or $R_{arc}^j$ of $d_j$, where $j=1,2,\dots,k$ (see Fig \ref{figl2c2c}). Similarly, the point $p_{i'}$ can also lie on $L_{arc}^{j'}$ or $R_{arc}^{j'}$ of $d_{j'}$, where $j'=j, j+1,j+2,\dots,k$. Here, the disks centered between $p_i$ and $p_{i'}$ are compactly packed and determining the optimal radius along with the positions of $p_{i}$ and $p_{i'}$. It is easy to observe that there are a constant number $c$ of possible positions for the two disks $d_j$ and $d_{j'}$ in an optimal solution such that the two points $ p_i,p_{i'}$ lie on their boundaries and does not let their radius $r_{\textsc{can}}$ to increase by repositioning all the $k$ disks.
Therefore, the number of all candidate radii $r_{\textsc{can}}$ is ${n\choose 2}\sum\limits_{j=1}^k\sum\limits_{j'=j}^kc=k^2c{n\choose 2}=O(cn^2k^2)$. The candidate radii values $r_{\textsc{can}} \in {\cal L}$ can be computed from the following equation:
\begin{equation*}
 x(p_{i'})-x(p_{i})= 2(j'-j)r_{\textsc{can}} \pm_{(1)} \sqrt{r^2_{\textsc{can}}-(y(p_i)-y(q))^2} \pm_{(2)} \sqrt{r^2_{\textsc{can}}-(y(p_{i'})-y(q))^2}
\end{equation*} 
where in $\pm_{(1)}$, $+$ indicate $p_{i}$ lies on $L_{arc}^j$ of $d_j$ and $-$ indicate $p_{i}$ lies on $R_{arc}^j$ of $d_j$, similarly, in $\pm_{(2)}$, $+$ indicate $p_{i'}$ lies on $R_{arc}^{j'}$ of $d_{j'}$ and $-$ indicate $p_{i'}$ lies on $L_{arc}^{j'}$ of $d_{j'}$.

\begin{figure}[!htb]   
   \begin{centering}
       \includegraphics[width=.8 \linewidth]{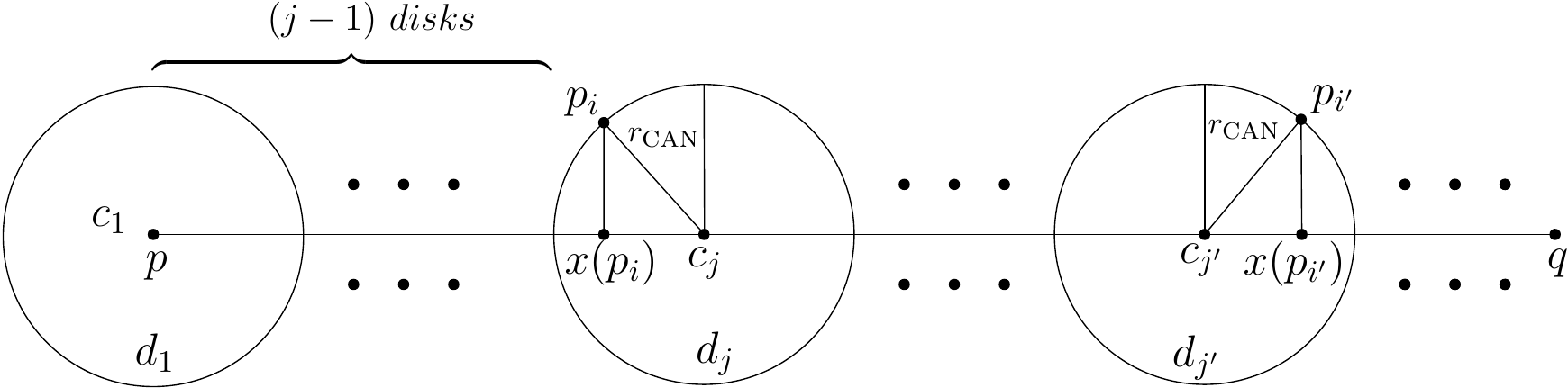}
     \caption{Two points $p_i$ and $p_{i'}$ determining the radius $r_{\textsc{can}}$}\label{figl2c2c}
\end{centering} 
\end{figure}
\noindent\textbf{Case 4:} Assume the point $p_i\in P$ lies on $R_{arc}^j$ of some disk $d_j$ and there are also other (optional) points $p_{i'}, p_{i''}\in P$ lying on boundary arcs of the disks on the right of $d_j$. Based on all posible positions of $p_i$, $p_{i'}$ and $p_{i''}$ we can observe that at most two of these points will determine $r_{\textsc{can}}$ and each of these cases corresponds to one of the above cases.
\begin{itemize}
\item[\textbf{(i)}] When $p_{i'}\in P$ lies on $L_{arc}^{j+1}$ of the disk $d_{j+1}$ and there is empty space between $d_{j}$ and $d_{j+1}$ (see Figure \ref{figl2c2a}). This corresponds to \textbf{case 2} above as either $j-1$ disks on the left of $p_i$ or $k-j-1$ disks on the right of $p_{i'}$ are compactly packed with their radius increased to the maximum possible value. 

\item[\textbf{(ii)}] When $p_{i'}\in P$ lies on $L_{arc}^{j'}$ of the disk $d_{j'}$ and $p_{i''}\in P$ lies on the disk $d_{j''}$, where $j'-j>1$ and $j''-j'>1$ (see Fig. \ref{figl2c2b6}). Then, the disks in the optimal packing can be partitioned into four subsequences of consecutive disks $d_1, d_2, \ldots, d_j$, $d_j, d_{j+1}, \ldots, d_{j'}$, $d_{j'}, d_{j'+1}, \ldots, d_{j''}$, and $d_{j''}, d_{j''+1}, \ldots, d_k$. In at least one of these, the disks must be compactly packed with their radius increased to the maximum possible value, otherwise it would contradict that we have the optimal packing. Hence, one of the above cases applies and at most two points among $p_i, p_{i'}, p_{i''}$ determine the value of $r_{\textsc{can}}$.
\end{itemize}
\begin{figure}[!htb]   
    \begin{centering}
     \includegraphics[width=.8\linewidth]{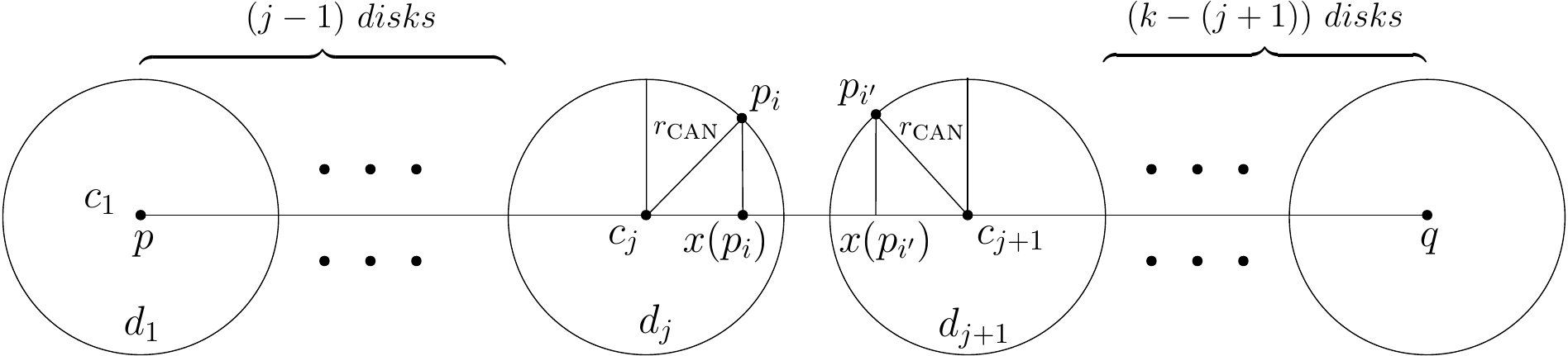}
     \caption{Illustration of case 4(i)}\label{figl2c2a}
\end{centering} 
\end{figure}
\begin{figure}[!htb]   
    \begin{centering}
     \includegraphics[width=.8\linewidth]{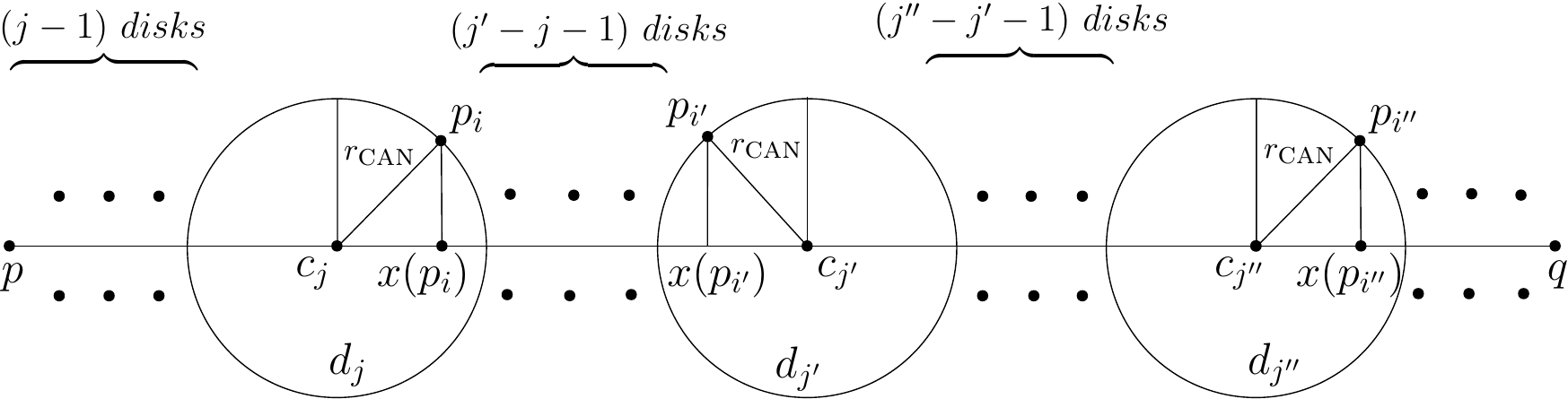}
     \caption{Illustration of case 4(ii)}\label{figl2c2b6}
\end{centering} 
\end{figure}
Thus the lemma follows.
\end{proof}
\begin{theorem}\label{th3}
For a given line segment $\overline{pq}$ and an ordered set $P$ of $n$ points in the Euclidean plane, we can solve the \textsc{COFL} problem optimally in $O((nk)^2\log{(nk)}+(n+k)\log{(nk)})$ time.
\end{theorem}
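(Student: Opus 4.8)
The plan is to combine Lemma~\ref{leuclidean} with the linear-time decision procedure of~\cite{Sing2021} (Algorithm~\ref{alg1}) through a binary search over the precomputed candidate set $\mathcal{L}$. First I would invoke Lemma~\ref{leuclidean}: it guarantees that an optimal packing is pinned down by at most two influencing points of $P^{inf}$, that the associated optimum $r_{max}$ arises as the solution of one of the fixed equations displayed in its proof (Cases~2--4), and consequently that $r_{max}\in\mathcal{L}$ with $|\mathcal{L}|=O((nk)^2)$. Each candidate radius is the root of one such equation and can be extracted in $O(1)$ time, so $\mathcal{L}$ is assembled in $O((nk)^2)$ time; the exhaustiveness of $\mathcal{L}$, i.e.\ the assertion $r_{max}\in\mathcal{L}$, is precisely what the case analysis of the lemma supplies and is the substantive ingredient on which the correctness rests.

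The second ingredient I would establish is monotonicity of the decision predicate: if \textsc{Dcofl}$(P,k,L)$ returns \textsc{yes}, then \textsc{Dcofl}$(P,k,L')$ returns \textsc{yes} for every $L'\le L$, since any valid placement of $k$ disks of radius $L$ remains valid after shrinking every disk to radius $L'$ (both the interior-freeness with respect to $P$ and the pairwise disjointness are preserved under shrinking). This monotonicity is what renders the binary search meaningful: with $\mathcal{L}$ sorted in ascending order, the radii for which the answer is \textsc{yes} form a prefix, and $r_{max}$ sits exactly at the boundary between this \textsc{yes}-prefix and the \textsc{no}-suffix.

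With these two facts in hand, the algorithm is straightforward: sort $\mathcal{L}$, at cost $O\!\left((nk)^2\log(nk)\right)$; then binary search on the sorted array, at each of the $O(\log(nk))$ probes setting $L$ to the median candidate and calling Algorithm~\ref{alg1}, which runs in $O(n+k)$ time on the ordered input (computing $I$ and $I^{\mathsf{c}}$ and then greedily packing). By monotonicity together with exhaustiveness, the largest candidate that returns \textsc{yes} equals $r_{max}$, and the packing reported by the final successful call realizes it. Summing the three contributions---$O((nk)^2)$ to build $\mathcal{L}$, $O((nk)^2\log(nk))$ to sort, and $O((n+k)\log(nk))$ for the $O(\log(nk))$ decision calls---yields the claimed bound $O((nk)^2\log(nk)+(n+k)\log(nk))$.

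The hard part is not in this wrapper but in the exhaustiveness inherited from Lemma~\ref{leuclidean}: one must be confident that the enumerated cases genuinely cover every configuration in which two points fix the optimum, and in particular that configurations with three or more boundary points collapse---as the Case~4 argument does---to a two-point case in which the intervening disks are compactly packed so that no repositioning can enlarge the radius. Granting the lemma, the only remaining care is routine bookkeeping: verifying that each defining equation returns $r_{\textsc{can}}$ in $O(1)$ (choosing the correct sign/branch of the square-root terms according to whether a point lies on a left or right arc segment) and that the interval computation inside each call to Algorithm~\ref{alg1} stays within $O(n+k)$, so that the search term never dominates the sorting term.
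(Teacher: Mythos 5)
Your proposal is correct and follows essentially the same route as the paper: take the candidate set $\mathcal{L}$ of size $O((nk)^2)$ guaranteed by Lemma~\ref{leuclidean}, sort it, and binary search with the $O(n+k)$-time decision algorithm, giving $O((nk)^2\log(nk)+(n+k)\log(nk))$. The only difference in emphasis is that you make the monotonicity of \textsc{Dcofl} explicit (the paper leaves it implicit here, invoking it only later in the parametric-search section), while the paper's proof instead spends its text justifying the constant $c=10$ in the candidate count, a detail you delegate back to the lemma.
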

\begin{proof}
The constant $c=10$ because of the following: (i) no point lies on the boundary of any disk in optimal packing, (ii) $p_i$ lies on $L_{arc}^{j}$ or $R_{arc}^{j}$, (iii) both $p_i, p_{i'}$ lie on $L_{arc}^{j}$ or $R_{arc}^{j}$, (iv) $p_i$ lies on $L_{arc}^{j}$ and $p_{i'}$ lies on $R_{arc}^{j}$, (v) $p_i$ lies on $R_{arc}^{j}$ and $p_{i'}$ lies on $R_{arc}^{j'}$ or $p_i$ lies on $R_{arc}^{j}$ and $p_{i'}$ lies on $L_{arc}^{j'}$ or $p_i$ lies on $L_{arc}^{j}$ and $p_{i'}$ lies on $R_{arc}^{j'}$ or $p_i$ lies on $L_{arc}^{j}$ and $p_{i'}$ lies on $L_{arc}^{j'}$. Observe that (i) corresponds to {\bf case-1} and contributes 1, (ii) corresponds to {\bf case-2} and contributes 2, (iii) contributes 2, (iv) contributes 1, (v) contributes 4, hence $c=1+2+2+1+4=10$. Therefore, $|{\cal L }|=O((nk)^2)$ and we find $r_{max}\in {\cal L}$ by a binary search on sorted ${\cal L}$. Sorting ${\cal L}$ takes $O((nk)^2\log{(nk)})$ and then invoking the decision algorithm $O(\log(nk))$ times results in overall time of $O((nk)^2\log{(nk)}+(n+k)\log{(nk)})$.\end{proof}

\subsection{Algorithm based on parametric search}
The {\tt FPTAS} given in \cite{Sing2021} for the \textsc{COFL} problem is weakly polynomial because the number of calls to the decision algorithm is proportional to the number of bits of precision of accuracy for the desired approximation. Here, we discuss an algorithm based on (slower version of) Megiddo's parametric search \cite{MG1983} so that its running time is independent of any numerical precision parameter ($\frac{1}{\epsilon}$). As we know, in a solution based on parametric search, there is a test algorithm, and a decision algorithm wherein the test algorithm is typically a step-by-step simulation of the decision algorithm. We now will describe how to simulate the steps of Algorithm \ref{alg1} at the unknown maximum $L^*(=r_{max})$. 
		
Consider a point $p_i\in  P$ which is having two center points $c_{i,1}$ and $c_{i,2}$ on the segment $\overline{pq}$ (see Fig. \ref{figp}). Let the coordinates of these points be $p_i=(x(p_i),y(p_i))$, $c_{i,1}=(x_{i,1},y_{i,1})$ and $c_{i,2}=(x_{i,2},y_{i,2})$. Clearly, these points and $L$ satisfy the equations:
\begin{equation}\label{eq:1}
L^2=(x(p_i)-x_{i,1})^2+(y(p_i)-y_{i,1})^2 
\implies L^2=(x(p_i)-x_{i,1})^2+(y(p_i)-y(q))^2
\end{equation}
\begin{equation} \label{eq:2}
L^2=(x(p_i)-x_{i,2})^2+(y(p_i)-y_{i,2})^2 
\implies L^2=(x(p_i)-x_{i,2})^2+(y(p_i)-y(q))^2
\end{equation}
\hspace{.5cm} where $y_{i,1}= y_{i,2}= y(p)=y(q)$ as both the points $c_{i,1}$ and $c_{i,2}$ are located on $\overline{pq}$.

Since we know the coordinate values $y(q)$, $x(p_i)$ and $y(p_i)$, the values of $x_{i,1}$, $x_{i,2}$ are given as follows:
\begin{equation*}
\hspace{1cm}
\begin{split}
&(x(p_i)-x_{i,1})^2=L^2-(y(p_i)-y(q))^2 \implies x_{i,1}=x(p_i)-(L^2-(y(p_i)-y(q))^2)^{\frac{1}{2}},\\
&(x(p_i)-x_{i,2})^2=L^2-(y(p_i)-y(q))^2 \implies x_{i,2}=x(p_i)-(L^2-(y(p_i)-y(q))^2)^{\frac{1}{2}} \\
&\text{respectively},\ \textit{if} \  \  y(p_i)\leq y(q)+L  \ \ \  \ \ \ \ \ \ \ \ \ \ \ \ \  \  \ \ \ \ \ \ \ \ \ \ \ \ \ \ \ \ \ \  \ \ \ \ \ \ \ \ \ \ \ \ \ \ \ \ \  \ \ \ \ \ \ \ \ \ \  (3)
\end{split}
\end{equation*}

Now, consider the end points of the $i$th complemented interval as $[r_{i-1}l_i]$ (at line 5 of Algorithm \ref{alg1}). Let the coordinates $r_{i-1}=(x_{i-1,2}, y_{i-1,2})$, and $l_i=(x_{i,1},y_{i,1})$, then
\begin{equation*}
 \hspace*{-2.5cm}
\begin{split}
\gamma&=\frac{||r_{i-1}l_i||}{2L}=\frac{(x_{i,1}-x_{i-1,2})}{2L}, \\ \gamma&=\frac{x(p_i)-(L^2-(y(p_i)-y(q))^2)^{\frac{1}{2}}-(x(p_{i-1})-(L^2-(y(p_{i-1})-y(q))^2)^{\frac{1}{2}})}{2L}.
\end{split}
\end{equation*}

In the $for$-loop, at line 4 of Algorithm \ref{alg1}, we know the values of $j$ and $k$. With these values known we need to perform the comparison $j+\gamma+1-k\leq 0$. Note that this is a branching point depending on a comparison which involves a polynomial in $L$, $poly(L)$.
\begin{equation*}
\begin{split}
&j+\frac{x(p_i)-(L^2-(y(p_i)-y(q))^2)^{\frac{1}{2}}-(x(p_{i-1})-(L^2-(y(p_{i-1})-y(q))^2)^{\frac{1}{2}})}{2L}+1-k\leq 0, \\
& 2(j+1-k)L+(x(p_i)-x(p_{i-1}))-(L^2-(y(p_i)-y(q))^2)^{\frac{1}{2}}+(L^2-(y(p_{i-1})-y(q))^2)^{\frac{1}{2}}\leq 0, \\
& 2(j+1-k)L+(x(p_i)-x(p_{i-1}))\leq (L^2-(y(p_i)-y(q))^2)^{\frac{1}{2}}-(L^2-(y(p_{i-1})-y(q))^2)^{\frac{1}{2}}. \\
\end{split}
\end{equation*}
On simplification the above inequality becomes
\begin{equation*}
\begin{split}
& 2((j+1-k)^2-1)L^2+4(j+1-k)(x(p_i)-x(p_{i-1}))L +(x(p_i)-x(p_{i-1}))^2\\
& \ \ \ \ \ \ \ \ \ \ \ \ \ \ \ \ \ \ \ \ \ +2\sqrt{(L^2-(y(p_i)-y(q))^2)(L^2-(y(p_{i-1})-y(q))^2)} \leq 0 \ \ \ \ \ \ \ \ \ \ \ \ \ \ \ \ \ \ \ \ \ \ \ \ \ \ \ \ \ \ \ \ \ \ (4)
\end{split}
\end{equation*}
that involves a degree-two polynomial, $poly(L)$: $AL^2+BL+C+2((L^2-D)(L^2-E))^{\frac{1}{2}}=0$, where the coefficients $A$, $B$, $C$, $D$ and $E$ depend on the values known at the point of execution of the corresponding comparison step. To simulate comparison steps in the \textit{for} loop at line 6, we compute all the roots of the associated polynomial $poly(L)$ and invoke Algorithm \ref{alg1} with the value of $L$ equal to each of these roots. This yields an interval between two roots or a root and $0$ or $|pq|/(2(k-1))$ that contains $L^*$. This enables us to determine the sign of $poly(L^*)$ and proceed with the generic execution of the next step of the algorithm. Essentially, each time Algorithm \ref{alg1} returns \textsc{yes} for the guessed value of $L$, the region ${\cal R}$ (which depends on the right end of interval containing $L^*$) is shrinking, and hence the number of complemented intervals in $I^\mathsf{c}$ on which we have to run Algorithm \ref{alg1} is reducing, until the shrunken ${\cal R}$ corresponds to $L^*$. Finally, after completing the simulation of the \textit{for} loop, if $j=k$, then we return $L^*$. To construct the set $D=\{d_1,d_2,\dots,d_k\}$, we run Algorithm \ref{alg1} with the computed $L^*$ one more time. Therefore, we have the following theorem.

\begin{theorem}\label{thm4}
We have an algorithm to solve the \textsc{COFL} problem in $O((n+k)^2)$ using parametric search technique.
\end{theorem}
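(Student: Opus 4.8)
The plan is to apply Megiddo's parametric search paradigm to the linear-time decision algorithm (Algorithm~\ref{alg1}), treating the optimal radius $L^*=r_{max}$ as an unknown parameter and simulating the execution of the decision algorithm at $L^*$ without ever knowing its value explicitly. The decision algorithm runs in $O(n+k)$ time, and its control flow is governed entirely by a sequence of comparisons. Each comparison is a sign test of a low-degree expression in $L$ (as derived in the displayed computation leading to equation~(4), the critical comparison $j+\gamma+1\le k$ reduces to testing the sign of a degree-two polynomial $\poly(L)$ of the form $AL^2+BL+C+2\sqrt{(L^2-D)(L^2-E)}=0$). First I would confirm that \emph{every} branching point in Algorithm~\ref{alg1} — the floor computation of $\gamma$, the test $(j+\gamma+1)\le k$, and the break condition $j=k$ — can be resolved by finitely many such sign tests, each involving a bounded-degree algebraic function of $L$.

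**The core mechanism is the resolution of each comparison.** Rather than knowing $L^*$, I maintain an open interval $(\alpha,\beta)$ known to contain $L^*$, initialized to $\bigl(0,\,\|pq\|/(2(k-1))\bigr)$. At each comparison encountered in the simulated run, I compute the real roots of the associated polynomial $\poly(L)$ (a constant number of roots, since the degree is bounded — after clearing the radical the equation becomes a quartic). I then invoke the genuine decision algorithm of \cite{Sing2021} at each such root that lies inside $(\alpha,\beta)$. Because the decision predicate \textsc{Dcofl}$(P,k,L)$ is monotone in $L$ (larger $L$ makes packing harder, so once it returns \textsc{no} it stays \textsc{no}), these decision calls let me locate $L^*$ relative to the roots: I learn which subinterval between consecutive roots contains $L^*$, shrink $(\alpha,\beta)$ accordingly, and thereby determine the sign of $\poly(L^*)$. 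This fixes the outcome of the branch and lets the simulation proceed. The key observation (already noted in the excerpt) is that each \textsc{yes} answer shrinks the region $\mathcal{R}$, reducing the number of complemented intervals in $I^\mathsf{c}$ still to be processed, so the simulation makes genuine progress.

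**For the running-time accounting,** I would argue as follows. The simulated decision algorithm performs $O(n+k)$ comparisons in total. At each comparison I solve a constant-degree polynomial (constant work) and make $O(1)$ calls to the decision algorithm, each costing $O(n+k)$ time. Hence the total cost is $O(n+k)$ comparisons times $O(n+k)$ per decision call, giving $O((n+k)^2)$. This is the \emph{slower} (unbatched) version of parametric search referenced in the theorem; it avoids Cole's sorting-network speedup but already beats the $O((nk)^2\log(nk))$ bound of the binary-search algorithm. After the simulation completes, if the final state has $j=k$ then the interval $(\alpha,\beta)$ has collapsed to pin down $L^*$ as the root of the last binding comparison; one final call to Algorithm~\ref{alg1} with this $L^*$ reconstructs the disk set $D=\{d_1,\dots,d_k\}$.

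**The main obstacle I anticipate** is handling the radical in $\poly(L)$ rigorously when resolving each comparison. The inequality in (4) contains a term $2\sqrt{(L^2-D)(L^2-E)}$, so determining its sign across the candidate interval requires care: one must track the domain where the square roots are real-valued (i.e.\ where $L \ge \max(\sqrt{D},\sqrt{E})$, reflecting the constraint $y(p_i)\le y(q)+L$ noted with equation~(3)), and squaring to eliminate the radical can introduce spurious roots that must be filtered out by substitution back into the original inequality. A secondary subtlety is ensuring that the \emph{generic} simulation at the symbolic parameter $L^*$ faithfully mirrors what the decision algorithm would do at the true value — in particular that the floor operation $\lfloor \cdot/(2L)\rfloor$ in the computation of $\gamma$ is correctly emulated, since $\gamma$ is integer-valued and its value changes only at the discrete $L$-values where $\|r_{i-1}l_i\|/(2L)$ crosses an integer; these crossing points must be included among the candidate roots tested at each step. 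Once these algebraic technicalities are discharged, the quadratic running time follows immediately from the comparison count.
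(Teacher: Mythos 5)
Your proposal is correct and follows essentially the same route as the paper's proof: you simulate Algorithm~\ref{alg1} at the unknown $L^*$, resolve each branching comparison by computing the roots of the associated degree-two polynomial from inequality~(4), call the $O(n+k)$-time decision algorithm at those roots, and use monotonicity of \textsc{Dcofl}$(P,k,L)$ to localize $L^*$ in an interval and fix the branch, giving $O(n+k)$ comparisons times $O(n+k)$ per decision call, i.e.\ $O((n+k)^2)$, with a final run of Algorithm~\ref{alg1} to recover the disks. Your explicit treatment of the floor in $\gamma$ and of spurious roots introduced by squaring away the radical is, if anything, more careful than the paper's own terse argument.
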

\begin{proof}
From \cite{Sing2021} we know that Algorithm \ref{alg1} runs in $O(n+k)$ time. In the worst case, for each step of Algorithm \ref{alg1} we obtain two different values of $L$ and invoke Algorithm \ref{alg1} on each of them as the candidate radius. Let $L_1,L_2, \dots, L_t$ be those different values of $L$ across the entire simulation, where $t=O(n+k)$. Then, initially $r_{max} \in [0,\frac{||pq||}{2(k-1)}]$. After the entire simulation is completed, clearly $r_{max}=L^*=\max\{L_u \mid \textsc{Dcofl}(P, k, L_u)=\textsc{yes}, \ u=1,2,\dots,t\}$. Since degree of the polynomial in inequality (4) is at most 2 and the decision algorithm \textsc{Dcofl}$(P, k, L_u)$ is monotone for any $L_u\in {\rm I\!R}^+\cup\{0\}$, the entire setup fits in the framework of parametric search. Hence, the correctness of the algorithm follows and the overall time for the simulation is $O((n+k)^2)$.
\end{proof}

\subsection{Improved algorithm for $k=2$}

%

Here, we show that the decision problem \textsc{Dcofl}$(P, 2, L)$ can be solved in $O(\log n)$ parallel time using $n$ processors. Let $Q(L,P)$ be the complement of the union of $n$ open disk of radius $L$ centered at the demand points in $P$. Let $S(\overline{pq}, L)$ be the intersection of $Q(L,P)$ and $\overline{pq}$, which is the collection $I^\mathsf{c}$ of $O(n)$ disjoint feasible intervals $[r_{i-1}l_i]\subset \overline{pq}$, $i=1,2,\dots, m=O(n)$, where the coordinates $r_{i-1}=(x_{i-1,2}, y_{i-1,2})$, and $l_i=(x_{i,1},y_{i,1})$. Let an infeasible interval $[x_{i,1},x_{i,2}]$ be an interval on $\overline{pq}$, which is not feasible for centering a facility in that (see Figure \ref{figfp}). When we have the infeasible intervals computed, implicitly we also have computed the feasible intervals in $I^\mathsf{c}$. A parallel algorithm for computing these intervals is as follows: (1) We assign one demand point from $P$ to each of the $n$ processors. (2) Let each compute the corresponding infeasible interval $[x_{i,1},x_{i,2}]=d_i\cap \overline{pq}$, where $d_i$ is the open disk of radius $L$ centered at $p_i\in P$ for $i=1,2,\dots, n$ (see Figure \ref{figfp}). (3) Then, the merging of consecutive overlapping infeasible intervals into one bigger infeasible interval is performed by processors as follows: If an interval is not overlapping with its adjacent intervals then this interval is maintained on the same processor and for each pair of consecutive overlapping intervals, the processor of the first interval merges them into one and the second processor sits idle. In this way, for a sequence of consecutive overlapping intervals, alternating processors will perform the merging. This process will be repeated until there are only isolated intervals. Since in every step we merge a pair of consecutive overlapping intervals there will be at most $\log{n}+1$ steps in total and each step will take $O(1)$ parallel time. Also, the serial time to construct $S(\overline{pq}, L)$ is only $O(n)$.

\begin{figure}[!htb]

\begin{centering}
     \includegraphics[scale=0.7]{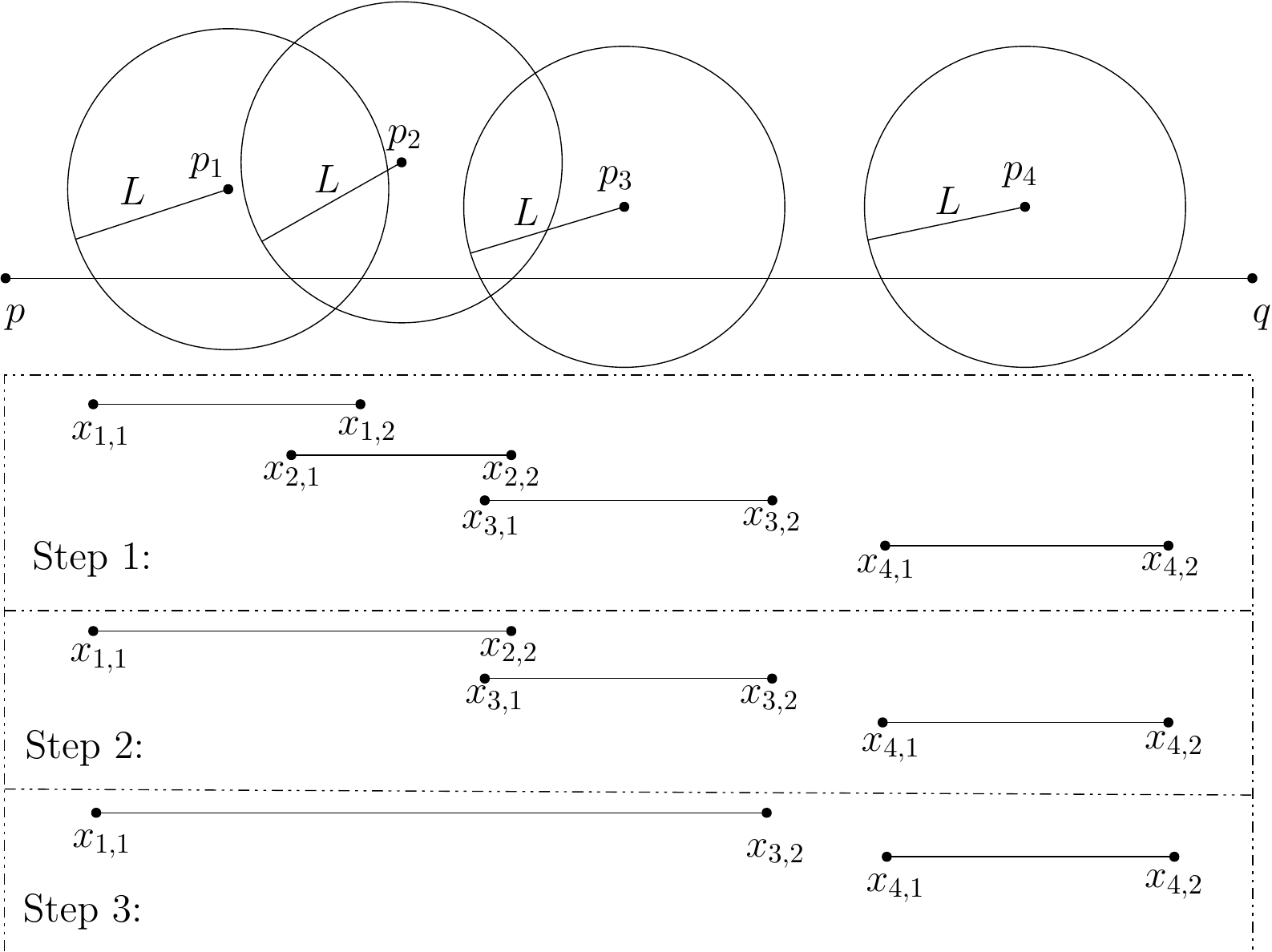}
     \caption{Merging of intervals in parallel algorithm} \label{figfp}
     \end{centering} 
\end{figure}

Now choosing two farthest points on the intervals in $S(\overline{pq}, L)$ is easy, just pick the left end point of the left most interval and right end point of the right most interval in $I^\mathsf{c}$, in $O(1)$ time, and place the two facilities centered at these points.

Therefore, overall time of the parametric algorithm to solve \textsc{COFL} for $k=2$ is $O(T_p\cdot n\cdot\log{n}+T_p\cdot T_s \cdot\log{n})=O(\log{n}\cdot n\cdot\log{n}+\log{n}\cdot n\cdot \log{n})=O(n\log^2{n}) \text{time}$, where $T_p$ denotes a parallel time and $T_s$ denotes a serial time for solving \textsc{Dcofl}$(P, 2, L)$ with $n$ processors.

This is an improvement over the earlier {\tt FPTAS} as well as the two proposed exact algorithms here for $k=2$.

\begin{remark}
The \textsc{COFL} under the Euclidean norm for $k=2$ can be solved in $O(n\log^2{n})$ time using the improved  parametric technique.
\end{remark}
%
%
%
%

\section{Circular COFL problem}
 In this section, we define a variant of the \textsc{COFL} problem in which the centers of the disks are restricted to lie on the boundary arc of a predetermined circle. The motivation for this variant of the problem is as follows. Consider a city, and now we need to place $k$ number of facilities such as dumping yards or fuel stations around the city such that the minimum distance between these facilities and between the facilities and demand points has to be maximized. It is also required to place these $k$ facilities as far apart from each other as possible along the boundary arc $\partial {\cal C}$ to avoid self competition among fuel stations and to avoid one of two or more close-by dumping yards getting overloaded quickly. 
 
The \textit{circular constrained obnoxious facility location problem} (\textsc{CCOFL}) problem is defined as follows: Given a set $P=\{p_1,p_2,p_3, \dots ,p_n \}$ of $n$ demand points in the plane, a predetermined circle ${\cal C}$ with radius $r_c$ and a positive integer $k$, locate $k$ facility sites on the boundary arc $\partial {\cal C}$ of ${\cal C}$ such that each demand point in $P$ is farthest from its closest facility site (i.e., in terms of their Euclidean distance) and the facility sites are placed farthest from each other along $\partial {\cal C}$ (i.e., in terms of arc length). Observe here that two disks representing two consecutive facility sites placed on $\partial {\cal C}$ may overlap strictly inside ${\cal C}$.  

The decision version of this problem \textsc{Dccofl}$(P, k, L)$ can be solved by using a similar method that was used to solve the \textsc{Dcofl}$(P, k, L)$ problem under the assumption that $r_c>>L$. When $r_c<L$, it is trivial that only one disk with radius $L$ will be packed on $\partial {\cal C}$ as the center of ${\cal C}$ will lie inside that packed disk. Now, to solve the decision version of \textsc{CCOFL} problem, we consider two circles ${\cal C}_1$ and ${\cal C}_2$ which are concentric with ${\cal C}$ and whose radii are $r_c-L$ and $r_c+L$, where $L$ is a real number (see Figure \ref{figc1}). We can observe that the points lying inside ${\cal C}_1$ and outside of ${\cal C}_2$ will not influence the packing of the disks. Similar to that of \textsc{Dcofl}$(P, k, L)$ problem here, we can obtain at least one center point and at most two center points on the boundary of ${\cal C}$ which are at a distance of $L$ from each point lying outside of ${\cal C}_1$ and inside of ${\cal C}_2$ (see $p_i$, $p_j$ and $p_k$ in Figure \ref{figc1}).

\begin{figure}[!htb]

\begin{centering}
     \includegraphics[scale=0.7]{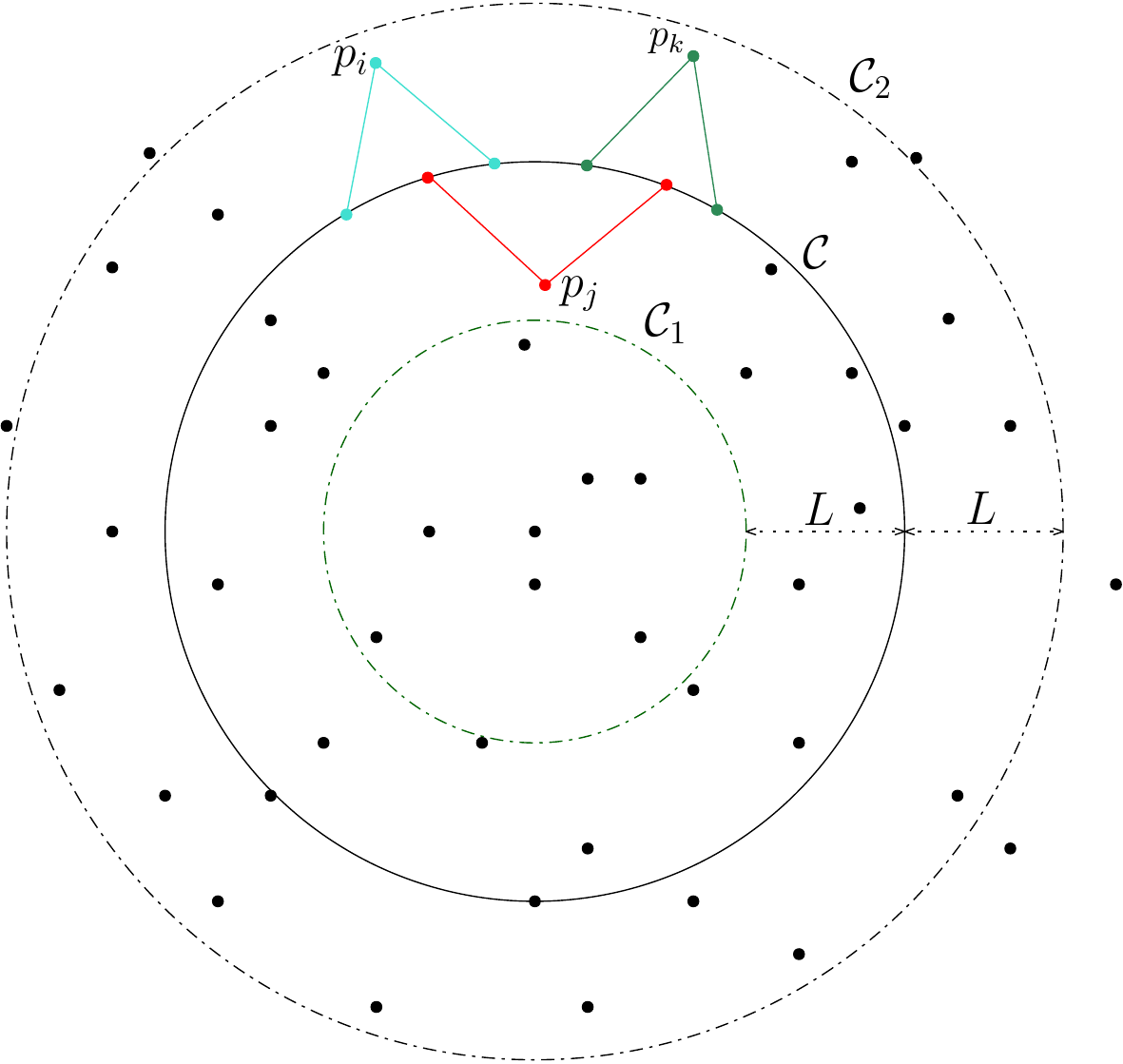}
     \caption{Circles ${\cal C}_1$ and ${\cal C}_2$ which are at distance $L$ from ${\cal C}$} \label{figc1}
     \end{centering} 
\end{figure}

Let $c_{i,1}$ and $c_{i,2}$ be the center-points corresponding to $p_i$, then none of the $k$ disks in an optimal solution to \textsc{Dcofl}$(P, k, L)$ will have their center points lying on the open arc interval $(c_{i,1}, c_{i,2})$ of the boundary of ${\cal C}$. 

Now, let $(c_{j,1}, c_{j,2})$ and $(c_{k,1}, c_{k,2})$ be the center points corresponding to $p_j$ and $p_k$ respectively.
In Figure \ref{figc1}, we can observe that the intervals $[c_{i,1}, c_{i,2}]$, $[c_{j,1}, c_{j,2}]$ and $[c_{j,1}, c_{j,2}]$ formed by $\tikzarc{{c_{i,1}c_{i,2}}}$, $\tikzarc{{c_{j,1}c_{j,2}}}$ and $\tikzarc{{c_{k,1}c_{k,2}}}$ are overlapping. Hence, none of the $k$ disks in the optimal solution will have their centers lying on the interval $([c_{i,1}, c_{i,2}]\cup [c_{j,1}, c_{j,2}]\cup [c_{k,1}, c_{k,2}])\setminus\{c_{i,1},c_{k,2}\}$, excluding the end-points of the union of the two intervals.

Without loss of generality, let $\{p_1, p_2, \ldots, p_m\}$  be the points of $P$ lying strictly outside of ${\cal C}_1$ and inside of ${\cal C}_2$, ordered clockwise on the boundary of ${\cal C}$, where $m\leq n$. We know that for every point $p_i$ lying strictly outside of ${\cal C}_1$ and inside of ${\cal C}_2$ there will be two center-points on the boundary of ${\cal C}$ which are at distance $L$, i.e., there is an interval $[l_i, r_i]$ for every point $p_i$, where $l_i=c_{i,1}$ and $r_i=c_{i,2}$. Merge all the overlapping intervals and then update the end-points of the new intervals on boundary of ${\cal C}$. Let $I=\{[l_1, r_1], [l_2, r_2], \ldots, [l_{m'}, r_{m'}]\}$ be the set of resulting pairwise disjoint intervals ordered clockwise, where $m'\leq m$.

Consider the complement of $I$ with respect to boundary of the circle ${\cal C}$, denoted as \[I^\mathsf{c}=\{[r_1, l_2], \ldots, [r_{m'}, l_1]\}.\]

The arc length of the complemented intervals in $I^\mathsf{c}$ can be calculated using the law of cosines formula as follows: The angle ($\theta$) subtended by arc at the center of ${\cal C}$ is 
\[\theta=\arccos{(1-\frac{d^2}{2r_c^2})} \]
where $d$ is the Euclidean distance between the end-points of the arc segment. Then, the length of the arc interval $[r_il_{i+1}]$ is $||\tikzarc{r_il_{i+1}}||=r_c\theta$, where $\theta=\arccos{(1-\frac{d^2}{2r_c^2})}$ for $i=1,2,\dots,m'$.

\begin{observation}\label{cob1}
Without loss of generality, we can assume that the first disk $d_1$ is centered at one of the end-points of arc segment in $I^c$.
\end{observation}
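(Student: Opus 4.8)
The plan is to start from an arbitrary valid placement of the $k$ facility centers on $\partial{\cal C}$ — one lying in each of $k$ of the feasible arcs of $I^\mathsf{c}$, with consecutive centers separated by arc-length at least $2L$, so that both the demand-point constraint and the arc-separation constraint hold — and to transform it by a rigid rotation of $\partial{\cal C}$ into an equivalent valid placement in which one center sits exactly at an endpoint of a feasible arc; a cyclic relabelling then lets us call that center $d_1$.

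First I would observe that a rigid rotation of the whole configuration, moving every center by the same arc-length along $\partial{\cal C}$, is an isometry of the circle and hence preserves every pairwise arc-length distance between centers. Consequently the separation condition (consecutive centers at arc-distance at least $2L$) is invariant under such a rotation, and by the preceding remark the fact that the disks may overlap strictly inside ${\cal C}$ is irrelevant to feasibility. The only constraint a rotation can disturb is the demand-point constraint, which is encoded exactly by the requirement that every center lie in a feasible arc of $I^\mathsf{c}$ rather than in a forbidden arc $[l_i,r_i]\in I$.

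Next I would show that we can always rotate until some center reaches a feasible-arc endpoint without violating feasibility en route. If some center already lies at an endpoint of its feasible arc, there is nothing to prove. Otherwise every center lies in the relative interior of a feasible arc; rotate the configuration in one fixed direction (say clockwise) so that each center advances toward the corresponding endpoint of the arc it currently occupies. Each center reaches that endpoint after a finite, strictly positive rotation angle, and I would stop at the smallest such angle, that is, at the first instant at which some center reaches an endpoint. Until that instant every center stays strictly inside its feasible arc, so no center ever crosses into a forbidden arc $[l_i,r_i]\in I$; the rotated placement is therefore again valid, and at the stopping instant at least one center lies exactly on an endpoint of an arc of $I^\mathsf{c}$.

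Finally I would invoke the cyclic symmetry of the set-up: the labels $d_1,\dots,d_k$ are assigned in clockwise order from an arbitrary reference point on $\partial{\cal C}$, and the $k$ disks are congruent, so we are free to choose the reference so that the center now sitting at a feasible-arc endpoint is the one labelled $d_1$. This produces a valid placement, with the same radius $L$ and the same arc-separations as the original, in which $d_1$ is centered at an endpoint of an arc segment of $I^\mathsf{c}$, which is exactly the assertion. I expect the only delicate point to be the third step — verifying that stopping at the first endpoint-hit keeps every other center strictly inside its feasible arc for the entire rotation, since this is precisely what rules out a transient violation of the demand-point constraint — while the relabelling in the last step is what makes the conclusion about $d_1$ specifically (rather than about some unnamed disk) legitimate.
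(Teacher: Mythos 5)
Your proposal is correct, but note that the paper itself gives no proof of this observation at all --- it is simply asserted as the justification for running the greedy packing once from each arc endpoint (Lemma~\ref{cl1}). Your rigid-rotation argument is the natural way to fill this gap and it is sound: a rotation of the whole configuration is an isometry of $\partial {\cal C}$, so it preserves all pairwise arc-length separations, and stopping at the first instant a center reaches the clockwise endpoint of the feasible arc containing it guarantees that no center has crossed into a forbidden arc of $I$ in the meantime; since the constraints are non-strict (distance at least $L$, arc separation at least $2L$), the configuration at the stopping instant is still feasible, and the cyclic relabelling makes that center $d_1$. Two small caveats you should state explicitly. First, your opening sentence describes the placement as having one center ``in each of $k$ of the feasible arcs''; in general several centers may share a single long feasible arc, though this changes nothing in your rotation argument. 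Second, the argument (and the observation itself) silently assumes $I\neq\emptyset$: if no demand point lies within Euclidean distance $L$ of $\partial{\cal C}$, then $I^{\mathsf{c}}$ is the entire circle, there are no arc endpoints, and the observation is vacuous --- that degenerate case has to be dispatched separately (feasibility there is just the condition $2\pi r_c\geq 2kL$), a point the paper also overlooks.
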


Since we don't know, the position of the disks for given $L$ in the optimal packing, we greedily pack disks by placing centers on $\partial {\cal C}$ with the first disk $d_1$ at every endpoint of the  arc segments in $I^c$. As there are $O(n)$ end-points in $I^c$, we have the following theorem.

\begin{lemma}\label{cl1}
Given the set $I^\mathsf{c}$ of complemented intervals and an integer $k>0$, Algorithm \ref{alg1} solves the \textsc{Dccofl}$(P, k, L)$ problem in $O(n(n+k))$ time.
\end{lemma}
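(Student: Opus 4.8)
The plan is to reduce the circular decision problem to $O(n)$ instances of the linear greedy packing that Algorithm~\ref{alg1} already performs, and to charge the stated bound to this reduction. The enabling fact is Observation~\ref{cob1}: in any feasible placement of $k$ disks we may assume, after a rotation, that the first disk $d_1$ is centered at an endpoint of one of the feasible arc segments of $I^\mathsf{c}$. Since each of the $n$ demand points contributes at most one merged feasible interval, $I^\mathsf{c}$ has only $O(n)$ endpoints, so there are only $O(n)$ candidate anchor positions for $d_1$ to consider.

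For each such endpoint I would cut $\partial{\cal C}$ at that point and unroll it into a linear sequence of feasible intervals, whose individual lengths are the arc lengths $r_c\theta$ obtained from the law-of-cosines expression derived above. With $d_1$ pinned at the cut and the remaining disks packed greedily in clockwise order, this is exactly the computation carried out by Algorithm~\ref{alg1} once $I^\mathsf{c}$ is handed to it as an ordered interval list; the run reports \textsc{yes} for this anchor iff it succeeds in seating all $k$ disks, and we answer \textsc{yes} overall iff some anchor succeeds. Correctness then rests on two observations: greedy packing from a fixed starting point is optimal on a line, since advancing every disk to the earliest feasible position can only create more room for later disks (this is the guarantee of Algorithm~\ref{alg1} from \cite{Sing2021}); and Observation~\ref{cob1} guarantees that the true optimal circular packing aligns with at least one of the enumerated anchors. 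Hence the procedure returns \textsc{yes} precisely when $k$ disks can be packed. For the running time, \cite{Sing2021} shows each call to Algorithm~\ref{alg1} costs $O(n+k)$, and we make $O(n)$ such calls, one per endpoint, for a total of $O(n(n+k))$.

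The step I expect to be most delicate is making the linearization faithful to the cyclic nature of the constraint, specifically the wrap-around separation between the last seated disk $d_k$ and the anchored disk $d_1$ across the cut. Because the arc-length separation between consecutive centers must be at least $2L$ on both sides of the cut, a naive linear packing could seat $d_k$ so late that it violates this separation with $d_1$; the remedy is precisely that we anchor $d_1$ at a segment endpoint and try every endpoint, so that the rotation certified by Observation~\ref{cob1} is among those tested and the greedy run at that anchor produces a genuinely feasible cyclic packing. Verifying that the computed arc lengths encode this cyclic adjacency correctly is the main bookkeeping burden, but it does not affect the asymptotic running time.
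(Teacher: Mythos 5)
Your proposal is correct and follows essentially the same route as the paper: invoke Observation~\ref{cob1} to restrict the anchor of $d_1$ to the $O(n)$ endpoints of the arc intervals in $I^\mathsf{c}$, run the $O(n+k)$-time greedy packing of Algorithm~\ref{alg1} once per anchor, and answer \textsc{yes} iff some anchor succeeds, giving $O(n(n+k))$ total. The paper's own proof is just this argument stated in one line; your additional discussion of unrolling the circle and enforcing the wrap-around separation between $d_k$ and $d_1$ is bookkeeping the paper leaves implicit.
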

\begin{proof}
Follows from Observation \ref{cob1} and the fact that there are $O(n)$ end points of intervals in $I^c$.
\end{proof}
\begin{theorem}\label{ct1}
We can get an $(1-\epsilon)$-factor approximation algorithm with $\epsilon>0$ ({\tt FPTAS}) for the {\it CCOFL} problem, that runs in $O(n(n+k)\log(\frac{||pq||}{2(k-1)\epsilon}))$ time, by employing doubling search and bisection methods.
\end{theorem}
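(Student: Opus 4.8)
The plan is to reuse the search structure of the FPTAS of \cite{Sing2021} essentially verbatim, swapping only the decision subroutine for the circular one whose cost is quantified in Lemma \ref{cl1}; the running time then inherits a single extra factor of $n$ from the more expensive decision step. Concretely, I would first verify that \textsc{Dccofl}$(P,k,L)$ is monotone in $L$: if $k$ interior-disjoint disks of radius $L$ can be centered on $\partial\mathcal{C}$ while leaving every demand point outside their interiors, then the same is possible for every $L'<L$. Geometrically, decreasing $L$ shrinks each forbidden arc interval $[c_{i,1},c_{i,2}]$ (hence enlarges the complemented arcs of $I^\mathsf{c}$ available for centering) and simultaneously relaxes the required arc separation $2L$ between consecutive centers, so a feasible placement for $L$ stays feasible after the radius is scaled down. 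This monotonicity is precisely the property that makes a one-dimensional search over the radius correct.

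Second, I would fix the search range. As in \cite{Sing2021}, the optimal radius satisfies $r_{max}\in\bigl[0,\tfrac{||pq||}{2(k-1)}\bigr]$, which supplies the upper endpoint used by both phases of the search.

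Third, I would run the identical doubling-and-bisection schedule of \cite{Sing2021}: a doubling (exponential) search first brackets $r_{max}$ inside an interval whose endpoints differ by a constant factor, and bisection then refines this bracket until its relative width drops below $\epsilon$, at which point its lower endpoint is a value $L$ with $L\ge(1-\epsilon)\,r_{max}$. Because the control logic of the search is untouched and only the decision oracle is replaced, the number of invocations of the decision routine is the same as in the segment case, namely $O\!\bigl(\log\tfrac{||pq||}{2(k-1)\epsilon}\bigr)$.

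Finally, the only change from the segment FPTAS is that each decision call now executes \textsc{Dccofl} rather than the $O(n+k)$-time segment routine; by Lemma \ref{cl1} each such call costs $O(n(n+k))$. Multiplying the unchanged iteration count by this per-call cost yields the claimed bound $O\!\bigl(n(n+k)\log\tfrac{||pq||}{2(k-1)\epsilon}\bigr)$. I expect the only genuinely non-routine point to be the monotonicity check in the wrap-around geometry of $\partial\mathcal{C}$ — in particular confirming that the interval-merging step and Observation \ref{cob1} (placing $d_1$ at an endpoint of an arc of $I^\mathsf{c}$) behave consistently as $L$ decreases; the remaining accounting (iteration count times per-call cost) is immediate.
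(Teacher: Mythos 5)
Your proposal is correct and follows essentially the same route as the paper: the paper's proof is a one-line citation of Lemma \ref{cl1} together with the FPTAS theorem of \cite{Sing2021}, i.e., exactly your plan of keeping the doubling-and-bisection schedule (with its $O\bigl(\log\frac{||pq||}{2(k-1)\epsilon}\bigr)$ decision calls) and replacing the $O(n+k)$ segment oracle by the $O(n(n+k))$ circular oracle. Your explicit monotonicity check for \textsc{Dccofl} is a detail the paper leaves implicit, but it is the same argument, not a different one.
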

\begin{proof}
Follows from Lemma \ref{cl1} and Theorem in \cite{Sing2021}.
\end{proof}
\begin{remark}
\cite{Sham1975} Unconstrained \textsc{COFL} problem can be solved in $O(n\log{n})$ time for $k=1$ by finding the largest empty circle in a Voronoi diagram formed by $n$ demand points as the voronoi sites.
\end{remark}

%
%
%
%

\section{Conclusion and future work}
We proposed a brute force algorithm based on binary search that solves the \textsc{COFL} problem exactly in $O((nk)^2\log{(nk)}+(n+k)\log{(nk)})$ time for both rectilinear and Euclidean cases. 
We showed that using Megiddo's parametric search technique, we can improve the running time to $O((n+k)^2)$ time. For $k=2$, we have a parametric algorithm that runs in $O(n\log^2{n})$ which would improve the parametric algorithm. It is not clear how to design a parallel algorithm for \textsc{Dcofl}$(P,k, L)$ for any $k$. We also gave a {\tt FPTAS} for \textsc{CCOFL} problem that runs in $O(n(n+k)\log(\frac{||pq||}{2(k-1)\epsilon}))$ time.

For any $k\geq 2$, unconstrained continuous \textsc{OFL} is an elusive open problem since it was first mentioned by Kartz et al. \cite{Katz2002}.  However, for rectilinear variant \textsc{UCOFL} is tractable at least from exact exponential algorithm, and for $k=1,2$, there are polynomial time algorithms.

\small
\bibliographystyle{abbrv}

\begin{thebibliography}{99}

\bibitem{MG1983}
N. Megiddo. 
\newblock Applying parallel computation algorithms in the design of serial algorithms.
\newblock {\em Journal of the ACM (JACM)}, 30(4):852--865, 1983.

\bibitem{Abra2010}
S. Abravaya and M. Segal.
\newblock Maximizing the number of obnoxious facilities to locate within a bounded region. 
\newblock {\em Computers \& operations research}, 37(1):163--171, 2010.
\bibitem{Katz2002}
M. J. Katz, K. Kedem and M. Segal. 
\newblock Improved algorithms for placing undesirable facilities. 
\newblock {\em Computers \& Operations Research}, 29(13):1859--1872, 2002.

\bibitem{GF1984}
G. Frederickson and D. Johnson.
\newblock Generalized selection and ranking: sorted matrices.
\newblock {\em SIAM Journal of Computing}, 13:14--30, 1984.

%
\bibitem{Hoch1985}
D. S. Hochbaum and W. Maass, 1985. 
\newblock Approximation schemes for covering and packing problems in image processing and VLSI. 
\newblock {\em Journal of the ACM (JACM)}, 32(1):130-136, 1985.
\bibitem{Sing2021}
Sigireddy V., R. and Basappa, M.: Constrained Obnoxious Facility Location on a Line Segment. In: Proceedings of the 33rd Canadian Conference on Computational Geometry, 362--367 (2021). 
%
%
%
%
%
%
%
%
%
%
%
%
%
%
%
%
%
%
%
%
%
%


\bibitem{Sham1975}
Shamos, M. I., and Hoey, D.: Closest-point problems. 16th Annual Symposium on Foundations of Computer Science (sfcs 1975), 151--162 (1975).
 


\bibitem{Agar1994}
Agarwal, P. K., Sharir, M. and Toledo, S., 1994. Applications of parametric searching in geometric optimization. Journal of Algorithms, 17(3), pp.292-318.
 
\bibitem{Diaz2003}
D{\'\i}az-B{\'a}{\~n}ez, J. M., Hurtado, F., Meijer, H., Rappaport, D., Sellar{\`e}s, J. A., 2003. The largest empty annulus problem. International Journal of Computational Geometry \& Applications, 13(04), pp.317-325.

%
%

\bibitem{Bae2021}
Bae, S.W., Baral, A. and Mahapatra, P.R.S., 2021. Maximum-width empty square and rectangular annulus. Computational Geometry, 96, p.101747.

\bibitem{Chur2022}
Church, R.L. and Drezner, Z., 2022. Review of obnoxious facilities location problems. Computers \& Operations Research, 138, p.105468.
%
\bibitem{Qin2000}
Qin, Z., Xu, Y. and Zhu, B., 2000, July. On some optimization problems in obnoxious facility location. In International Computing and Combinatorics Conference (pp. 320-329). Springer, Berlin, Heidelberg.
%
%

\end{thebibliography}

\end{document}

